\newtheorem{theorem}{Theorem}
\newtheorem{example}{Example}
\newtheorem{lemma}{Lemma}
\newtheorem{proposition}{Proposition}
\newtheorem{assumption}{Assumption}
\def\E{\mathsf{E}}
\def\phi{\varphi}
\def\l{\left}
\def\r{\right}
\def\({\left(}
\def\){\right)}
\def\bff{{\mathbf{f}}}
\def\bh{{\mathbf{h}}}
\def\bq{{\mathbf{q}}}
\def\bs{{\mathbf{s}}}
\def\bu{{\mathbf{u}}}
\def\b0{{\mathbf{0}}}
\newcommand{\nn}{\nonumber}
\begin{document}

\title{\huge \setlength{\baselineskip}{30pt} Stochastic Control of Event-Driven Feedback  in Multi-Antenna  Interference  Channels}
\author{\large \setlength{\baselineskip}{15pt}Kaibin Huang, Vincent K. N. Lau, and Dongku Kim \thanks{Copyright (\copyright) 2011 IEEE. Personal use of this material is permitted. However, permission to use this material for any other purposes must be obtained from the IEEE by sending a request to pubs-permissions@ieee.org.

K. Huang and D. Kim are with Yonsei University, S.  Korea;  V. K. N. Lau is   with Hong Kong University of Science and Technology. Email: huangkb@me.com, eeknlau@ust.hk, dkkim@yonsei.ac.kr. This work was  supported by the National Research Foundation of Korea under the grant $2011$-$8$-$0740$, Hong Kong RGC under the grant $614910$, and Ministry of Knowledge Economy of Korea under the grant $10035389$. 
}}

\maketitle

\begin{abstract}
Spatial interference avoidance is a simple and effective way of mitigating interference in  multi-antenna wireless networks. The deployment of this technique requires channel-state information (CSI) feedback from each receiver to all interferers, resulting in substantial network overhead. To address this issue, this paper proposes the method of distributive control that intelligently allocates CSI bits over multiple feedback links and adapts feedback to channel dynamics. For symmetric channel distributions, it is optimal for each receiver to equally allocate  the average sum-feedback rate for different feedback links, thereby decoupling  their control.   Using the criterion of minimum sum-interference power,  the optimal feedback-control  policy is shown using stochastic-optimization theory to exhibit opportunism. Specifically, a specific feedback link is turned on only when the corresponding transmit-CSI error is  significant or  interference-channel gain   is large, and the optimal number of feedback bits  increases with this gain.   For high mobility and considering the  sphere-cap-quantized-CSI  model, the optimal feedback-control policy is shown to perform  \emph{water-filling} in time,  where the number of feedback bits increases logarithmically  with the corresponding interference-channel gain. Furthermore, we consider asymmetric channel distributions with heterogeneous path losses and high mobility, and prove the existence of a unique optimal policy for jointly controlling multiple feedback links. Given the sphere-cap-quantized-CSI model, this  policy is shown to perform water-filling over feedback links.  Finally, simulation demonstrates that feedback-control yields significant throughput gains compared with the conventional differential-feedback method. 
\end{abstract}

\begin{keywords}
Interference channels, array signal processing, stochastic optimal control, feedback communication, time-varying channels, dynamic programming, Markov processes
\end{keywords}

\section{Introduction}\label{Section:Intro}
Interference limits the performance of decentralized wireless networks but can be effectively mitigated by multi-antenna techniques, namely spatial interference cancelation and avoidance. 
In a frequency-division-duplexing  network, spatial interference avoidance at interferers requires feedback of interference channel state information (CSI)  from all interfered receivers, called \emph{cooperative feedback}. Given finite-rate cooperative feedback, CSI  quantization errors result in residual interference.  Suppressing such interference requires high-resolution feedback over a network of feedback links, resulting in overwhelming network overhead.  This calls for research on intelligent feedback control  that optimally allocates  feedback bits over multiple feedback links and adapts feedback to channel dynamics, which is the theme   of this paper. 

\subsection{Prior Work}
Extensive research has been carried out  on designing   feedback-CSI-quantization algorithms for multi-antenna systems, called   \emph{limited feedback} \cite{Love:OverviewLimitFbWirelssComm:2008},  based on different approaches    including  line packing \cite{LovHeaETAL:GrasBeamMultMult:Oct:03} and Lloyd's algorithm \cite{Lau:CapBlockFadingCardConstFeedback:2004}. Besides quantization, another effective approach   for compressing feedback CSI is to explore CSI redundancy due to the  wireless-channel correlation in time \cite{Huang:LimFbBeamformTemporallyCorrChan:2009, KimLov:MIMODiffFBSlowFading:2011}, frequency \cite{ChoiMondal:InterpPrecodSpaMuxMIMOOFDMLimFb:2006}, and space \cite{MondalHeath:AdaptiveQ:04}. Though a few feedback bits suffice in a point-to-point multi-antenna system, the feedback requirement is more stringent in multi-antenna downlink where CSI errors cause multiuser interference \cite{Jindal:MIMOBroadcastFiniteRateFeedback:06}. This motivates the joint design of CSI feedback and scheduling  algorithms to exploit multiuser diversity for reducing the required numbers of feedback bits
\cite{YooJindal:FiniteRateBroadcastMUDiv:2007, SharifHassibi:CapMIMOBroadcastPartSideInfo:Feb:05,Huang:OrthBeamSDMALimtFb:07,Huang:SDMASumFbRate:06}. Both high-resolution feedback for multi-antenna downlink and progressive feedback for correlated channels  require CSI feedback with adjustable  resolutions. This is realized using  hierarchical CSI-quantizer codebooks \cite{Kim:FeedbackSubspaceCoopAntennaTempCorr, TrivHuang:TXRXDesignDownlinkLimFb} or  systematic codebook generation \cite{Heath:ProgressiveRefineBeamform, RagHeath:SystematicCodebookDesignBeam}. The current work also concerns variable-rate feedback but focuses on feedback control rather than codebook designs.  

Recent research on limited feedback explores more complex  network topologies. In \cite{ThuBolcskei2009:IntefAlignLimitedFeedback}, the decentralized wireless networks based on interference alignment \cite{CadJafar:InterfAlignment:2007}  are considered,  and the required scaling of the numbers of feedback bits  with respect to the signal-to-noise ratio (SNR) is derived such that the channel capacity is achieved for high SNRs. The Grassmannian codebooks designed for point-to-point beamforming systems with limited feedback is shown in  \cite{KhoYu:GrassmannianBeamformRelay} to be suitable for multiple-input-multiple-output (MIMO) amplify-and-forward relay systems. The algorithms for  cooperative feedback from the primary user to the secondary user are designed in \cite{HuangZhang:CoopFeedbackCognitiveRadio} for implementing  cognitive beamforming in two-user cognitive-radio systems.  Moreover, Lloyd's algorithm is applied in \cite{BhagaHeath:JointCSIQuantMulticellCoop:2010} to jointly quantize the CSI sent by a mobile to the desired and interfering base stations.  The above prior work does not explicitly optimize the tradeoff between the network performance and the amount of CSI overhead.

In wireless networks, excessive CSI feedback yields marginal performance gain per additional feedback bit but insufficient feedback causes unacceptable performance degradation. Therefore, feedback control is a pertinent issue for designing efficient wireless networks. In \cite{Love:DuplexDistortLimFbMIMO:2006}, CSI feedback rates are optimized for maximizing the sum throughput in a two-way beamforming system where a pair of transceivers exchange both data and CSI.
For a transmit beamforming system,  bandwidth is optimally partitioned for CSI feedback and data transmission \cite{Xie:OptimalBWAllocDataFeedbackMISO:2006}. For point-to-point multi-antenna precoding, sub-optimal algorithms have been proposed to adjust the CSI-codebook size according to the channel state  \cite{SimonLeus:FBQuantLinearPrecodSpatialMultiplex:2008} or jointly with the feedback interval based on channel temporal correlation \cite{KimLove:FrequentOrInfreqFBBeamforming:2011}. 
The problem of splitting the sum-feedback rate by a mobile for  multiple cooperative-feedback links to interferers  is studied in \cite{RamyaHeath:AdaptiveBitPartionDelayedLimFb} in the context of base-station collaboration. It was shown that more feedback bits should be sent to nearer interfering base stations so as to reduce the throughput loss caused by feedback quantization. The splitting of the sum-feedback rate among multiple users in a multi-antenna downlink system was investigated in \cite{KhoYu:LimFBCodebookDesignMultiuserCase:2010}, where the optimal feedback rate for a user is shown to increase logarithmically with the target signal-to-interference-plus-noise ratio (SINR). 
The feedback-bit  allocation considered in prior work is mostly static, targeting dedicated feedback channels in cellular networks \cite{3GPP-LTE}. In decentralized networks where a feedback channel is shared by multiple users,  more efficient feedback-allocation  should be adapted to   channel dynamics, motivating the event-driven feedback and stochastic feedback control.

\subsection{Contributions and Organization}
This work adopts   the approach of stochastic feedback control proposed in \cite{HuangLau:EventDrivenFeedbackControlBeamforming} but targets more complex systems.  Specifically, this paper concerns  the $K$-user multiple-input-single-output (MISO) interference channel where
there is an event-driven feedback controller at each receiver. The feedback controller dynamically and distributively determines the CSI feedback rate for each feedback link according to local CSI. As a result, each feedback controller serves multiple cooperative-feedback links in the current  system rather than a single feedback link to the intended transmitter  as in \cite{HuangLau:EventDrivenFeedbackControlBeamforming}. \footnote{In this paper, we focus on  cooperative feedback with some discussion of direct-link feedback, namely CSI feedback from receivers to their intended transmitters. Hereafter, cooperative feedback is referred to simply as feedback whenever there is no confusion.} Furthermore, we generalize the on/off feedback control in \cite{HuangLau:EventDrivenFeedbackControlBeamforming} to the variable-rate feedback control. 

This work establishes a novel approach of using stochastic feedback control to  achieve the  optimal tradeoff between the CSI-feedback overhead and sum interference power in the $K$-user multi-antenna interference channel.  The feedback controllers are designed based on several key assumptions. Channel coefficients are assumed to be independent and identically distributed (i.i.d.). The expectation of a CSI quantization error is assumed to be a monotone decreasing and convex function of the number of feedback bits, which is consistent with the popular CSI-quantizer models in \cite{Zhou:QuantifyPowrLossTxBeamFiniteRateFb:2005,YeungLove:RandomVQBeamf:05}. Moreover,  the channel parameters, namely channel gains and transmit CSI (CSIT) errors, are assumed to vary in time following Markov chains. The channel temporal correlation is further characterized by two assumptions. Given no feedback, samples of the channel-parameter processes conditioned on large past realizations stochastically dominate those conditioned on small ones; given feedback, the tail probability of the CSIT error is a monotone decreasing and convex function of the corresponding number of feedback bits in the past slot. The channels are assumed to follow independent block fading for the limiting case of high mobility. Based on these assumptions, the key findings of this work are summarized as follows. 
\begin{itemize}

\item[--]  Under an average sum-feedback-rate constraint,  a feedback  controller is designed  as a Markov decision process with average cost. By channel symmetry, it is optimal for each controller to equally split the average sum-feedback rate for all feedback links, reducing 
the problem of optimizing the multiple-feedback-link control policy  to the single-feedback-link-policy optimization. The optimal policy for minimizing the average sum-interference power is shown to exhibit opportunism. Specifically, feedback should be performed only when the corresponding interference-channel gain is large or the CSIT error is significant. Upon feedback, the optimal number of feedback bits for each feedback link increases with the corresponding interference-channel gain but is independent with the observed CSIT error. 

\item[--] For  high mobility and considering the sphere-cap-quantized-CSI model \cite{YooJindal:FiniteRateBroadcastMUDiv:2007, Zhou:QuantifyPowrLossTxBeamFiniteRateFb:2005},  more elaborate properties of the optimal feedback-control policy are derived. Specifically, it is shown that the number of feedback bits for each feedback link follows water-filling in time and is proportional to the logarithm of the corresponding interference-channel gain.

\item[--] We also consider asymmetric channel distributions where interference-channel gains are scaled by heterogeneous path losses. For high mobility, the problem of feedback-control-policy optimization  is decomposed into a \emph{master problem} that optimally allocates average feedback rates for multiple feedback links, and a \emph{sub-problem} that optimizes the  policy for controlling the feedback-bit allocation in time for a particular feedback link given an allocated average feedback rate. This decomposed optimization problems are proved to yield  a unique optimal policy. Furthermore, given the sphere-cap-quantized-CSI model,  the optimal feedback-control policy is shown to perform  water-filling over feedback links. 
\end{itemize}

The remainder of this paper is organized as follows. The system model is described in Section~\ref{Section:System}. The problem formulation for the optimal feedback control is presented in Section~\ref{Section:ProbForm}. The optimal feedback-control policies for the general case and the limiting case of high mobility are analyzed in Section~\ref{Section:FBControl:LowMob} and \ref{Section:FBControl:HiMob}, respectively. In~Section~\ref{Section:AsymChan},  the design of the feedback controller for asymmetric channel distributions is discussed. Simulation results are presented in Section~\ref{Section:Simulation}.
  
\begin{figure}[t]
\begin{center}
\includegraphics[width=8cm]{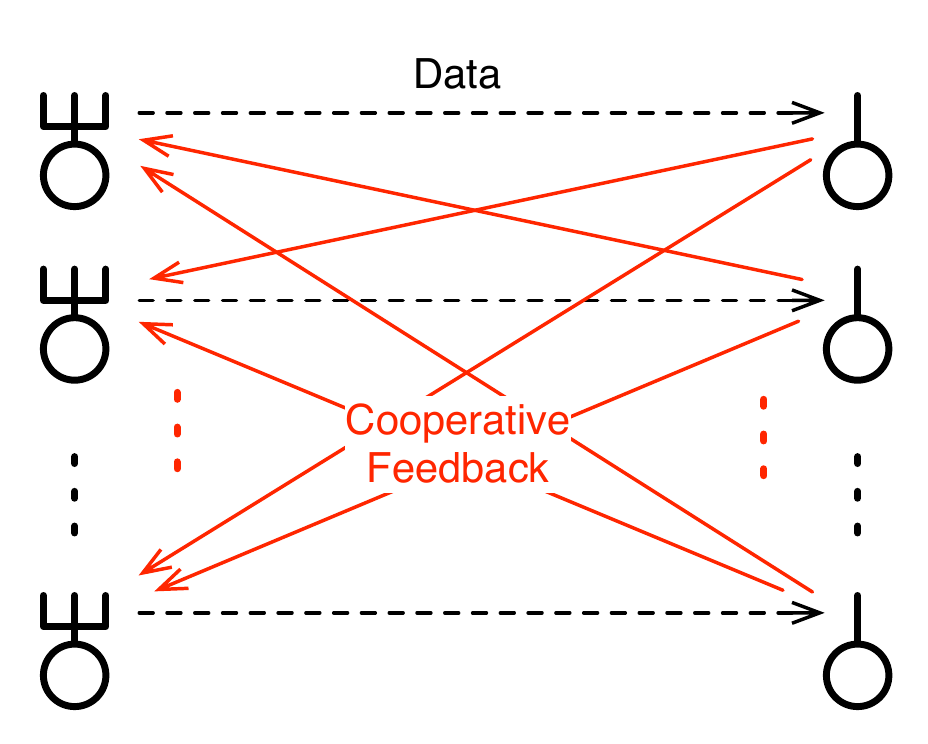}
\caption{The $K$-user MISO interference channel with cooperative feedback }
\label{Fig:Sys}
\end{center}
\end{figure}

\section{System Model}\label{Section:System}
We consider the  $K$-user MISO interference channel as illustrated in Fig.~\ref{Fig:Sys}.  Provisioned with $L$ antennas, each transmitter sends a single data stream to an intended receiver using beamforming.   
As illustrated in Fig.~\ref{Fig:FbControl},  time is slotted and each slot is divided into  the \emph{feedback phase} (feedback control   and cooperative CSI feedback) and the \emph{data phase} (data transmission).  Each system parameter affected by feedback is represented by the same symbol without and with the accent $``\ \check{}\ "$, corresponding to the beginnings of the feedback and data phases, respectively. Moreover, the subscript $t$ denotes the slot index.

\subsection{Zero-Forcing Transmit Beamforming}
Each transmitter uses beamforming to null interference to  $(K-1)$ unintended receivers. Let $\bh^{[mn]}_t$ denote the $L\times 1$ vector representing the channel from transmitter $n$ to   receiver $m$. To facilitate exposition, we decompose $\bh^{[mn]}_t$ as $\bh^{[mn]}_t =\sqrt{g_t^{[mn]}} \bs_t^{[mn]}$ where $g_t^{[mn]} = \|\bh^{[mn]}_t\|^2$ is the channel  gain and $\bs_t^{[mn]} = \bh^{[mn]}_t/\|\bh^{[mn]}_t\|$ specifies the channel direction. Transmitter $n$ applies zero-forcing beamforming by choosing  its beamformer $\bff^{[n]}_t$ to be orthogonal to the interference-channel directions. As a result, $K$ links are decoupled if all transmitters have perfect CSIT of the channels to their interfered receivers.

Consider the scenario where transmit beamforming at a transmitter  relies on finite-rate CSI feedback from interfered receivers. Let $\bu^{[mn]}_t$ with unit norm denote the CSIT at transmitter $n$ updated by the feedback of $\bs^{[mn]}$ from receiver $m$. Then the zero-forcing  beamformer $\bff^{[n]}_t$ at transmitter $n$ satisfies the constraints: 
 $(\bff^{[n]}_t)^\dagger \bu^{[mn]}_t=0$ for all   $m\neq n$, which requires $L \geq K$.  Under the finite-rate feedback constraints, imperfect CSIT results  in residual interference between links. The interference from transmitter $n$ to receiver $m$ has the power 
\begin{equation}\label{Eq:InterfPwr:Cmp}
I^{[mn]}_t = g^{[mn]}_t\l|(\bff^{[n]}_t)^\dagger \bs^{[mn]}_t \r|^2, \quad m \neq n 
\end{equation}
where unit  transmission power is used by all transmitters. 

\begin{figure}[t]
\begin{center}
\includegraphics[width=9cm]{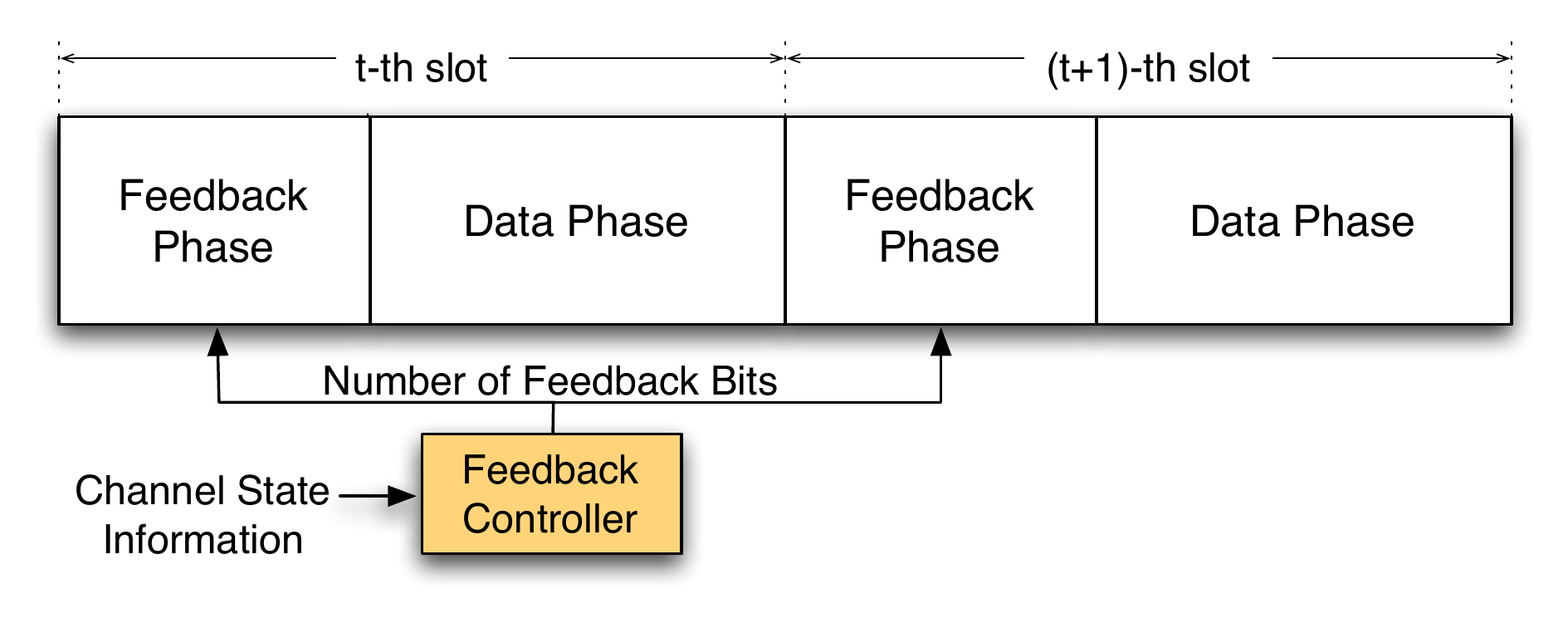}
\caption{Variable-rate feedback control }
\label{Fig:FbControl}
\end{center}
\end{figure}

\subsection{Variable-Rate Feedback Control}
In the feedback phase of  every slot, each receiver, say receiver $m$,  sends the quantized version $\hat{\bs}^{[mn]}_t$ of   $\bs^{[mn]}_t$  to interferer  $n$  in a variable-length packet comprising $B^{[mn]}$ bits. The variable-rate feedback is modeled as $B^{[mn]}\in\mathds{B}$, where $\mathds{B}$ is a set of nonnegative integers including $0$ that corresponds to no feedback.  As illustrated in Fig.~\ref{Fig:FbControl}, a feedback controller at each receiver controls the number of feedback bits sent to a particular interferer by observing the  interference-channel gain and the CSIT error that is defined as follows. The dynamics of the CSIT $\bu^{[mn]}_t$ at transmitter $n$ can be specified as 
\begin{eqnarray}
\bu^{[mn]}_{t+1}&=&\check{\bu}^{[mn]}_t\nn\\
& =& \l\{\begin{aligned}
&\hat{\bs}^{[mn]}_t,&& B^{[mn]}_t > 0\\
&\bu^{[mn]}_{t},&& B^{[mn]}_t = 0. 
\end{aligned}
\r. \label{Eq:TxCSI:a}
\end{eqnarray}
 The CSIT error is  defined as $\delta^{[mn]}_t = 1- \l|(\bs^{[mn]}_t)^\dagger \bu^{[mn]}_t \r|^2$ with  $\delta^{[mn]}_t = 0$ for the case of perfect CSIT: $\bu^{[mn]}_{t} = \bs^{[mn]}_{t}$ \cite{LovHeaETAL:GrasBeamMultMult:Oct:03}.
The feedback controller at receiver $m$ observes the state $\l\{\l(g^{[mn]}_t, \delta^{[mn]}_t \r)\mid n\neq m\r\}$ and generates the feedback decision $\l\{B^{[mn]}_t\mid n\neq m\r\}$. 
Similarly, we define the CSI-quantization error as $\epsilon^{[mn]}_t =  1- \l|(\bs^{[mn]}_t)^\dagger \hat{\bs}^{[mn]}_t \r|^2$. 
\begin{assumption} \label{As:QErr} The conditional expectation $\E\l[\epsilon^{[mn]}_t\mid B^{[mn]}_t\r]$  is a monotone decreasing and convex function of  $B^{[mn]}_t$. 
\end{assumption}

\begin{example}[Sphere-cap-quantized-CSI model] \label{Ex:SphereCap} \emph{The quantization error $\epsilon^{[mn]}$ is modeled in 
\cite{YooJindal:FiniteRateBroadcastMUDiv:2007, Zhou:QuantifyPowrLossTxBeamFiniteRateFb:2005}  to be uniformly distributed on a sphere-cap in $\mathds{C}^L$ with the following distribution function
\begin{equation}\label{Eq:EpsCCDF:1}
\Pr\l(\epsilon^{[mn]}\leq \tau\mid B^{[mn]}\r) =  \left\{\begin{aligned}
&2^{B^{[mn]}} \tau^{L-1},&& 0 \leq \tau \leq 2^{-\frac{B^{[mn]}}{L-1}}\\
&1,&& \textrm{otherwise}. 
\end{aligned}\right.
\end{equation}
Using this model,  the expectation of $\epsilon^{[mn]}$ is obtained as 
\begin{equation}
\E\l[\epsilon^{[mn]}\mid B^{[mn]}\r] = \frac{L-1}{L} 2^{-\frac{B^{[mn]}}{L-1}}  \label{Eq:ExpErr:SCap}
\end{equation}
which is a monotone decreasing and convex function of $B^{[mn]}$, consistent with Assumption~\ref{As:QErr}. }
\end{example}

\begin{example}[Random-vector quantization] \label{Ex:RVQ} \emph{As shown in \cite{YeungLove:RandomVQBeamf:05}, the use of a random beamformer codebook of i.i.d. and isotropic unitary vectors results in the following distribution of $\epsilon^{[mn]}$ 
\begin{equation}\label{Eq:EpsCCDF:2}
\Pr\l(\epsilon^{[mn]} \geq \tau \mid B^{[mn]}\r) = (1-\tau^{L-1})^{2^{B^{[mn]}}}
\end{equation}
and the expectation 
\begin{eqnarray}
\E\l[\epsilon^{[mn]} \mid B^{[mn]}\r] &=& 2^{B^{[mn]}}\mathsf{beta}\l(2^{B^{[mn]}}, \frac{L}{L-1}\r)\nn\\
&\approx & a e^{-\frac{1}{L-1}B^{[mn]}}, \quad B^{[mn]} \gg 1  \label{Eq:ExpErr:RVQ}
\end{eqnarray}
where $\mathsf{beta}(\cdot, \cdot)$ denotes the beta function and $a$ is a constant. The last  expression  is a monotone decreasing and convex function of $B^{[mn]}$, justifying Assumption~\ref{As:QErr}. }
\end{example}

Next, it follows from \eqref{Eq:TxCSI:a} that 
\begin{equation}\label{Eq:TxCSI}
\check{\delta}^{[mn]}_t = \l\{\begin{aligned}
&\epsilon^{[mn]}_t,&& B^{[mn]}_t > 0\\
&\delta^{[mn]}_t,&& B^{[mn]}_t = 0 
\end{aligned}
\r.
\end{equation}
given that channels remain constant within each slot as assumed in the sequel. Note that $\check{g}^{[mn]} = g^{[mn]}$ since it is unaffected by feedback. 

{ Finally, it is important to note that besides CSI, controlled feedback requires addition bits for specifying the number of feedback-CSI bits since it varies with the channel state.  Such overhead is unnecessary for feedback schemes with fixed numbers of feedback bits (see e.g., \cite{LovHeaETAL:GrasBeamMultMult:Oct:03, KimLov:MIMODiffFBSlowFading:2011}). 
Let $D$ denote the number of available decisions for the feedback-control policy. Assuming that the policy is known to the transmitter, the total number of feedback bits from receiver $m$ to transmitter $n$ in the $t$-th slot is $B^{[mn]}_t +\l\lceil\log_2 D \r\rceil$. It is observed from simulation that $D$ for the optimal policy is relatively small e.g., $3$ or $4$ (see Fig.~\ref{Fig:Policy}).} 

\subsection{Channel Model}
Channels vary with time but remain constant within each slot.  For simplicity, all  channel coefficients, namely the elements of the vectors $\l\{ \bh^{[mn]}_t\r\}$,  are assumed to be  samples of i.i.d circularly-symmetric complex Gaussian processes with unit variance, which is denoted as $\mathcal{CN}(0,1)$ (asymmetric channel distributions are considered in Section~\ref{Section:AsymChan}).   Note that as a result of channel isotropicity, the two channel parameters $g_t^{[mn]}$ and $\delta_t^{[mn]}$ are independent conditioned on $B_t^{[mn]}$.  The channel temporal correlation  is modeled using the following  two assumptions. 

\begin{assumption}\label{AS:SD} Each channel coefficient evolves as a Markov chain. Given $B^{[mn]}_{t-1}=0$, the distributions of $\l(g_t^{[mn]}, \delta_t^{[mn]}\r)$  conditioned on $\l(g_{t-1}^{[mn]}, \delta_{t-1}^{[mn]}\r)$  satisfy 
\begin{align}
&\Pr\l(\delta^{[mn]}_t\geq \tau_1\!\mid\! \delta_{t-1}^{[mn]} = a_1\r)\!\geq\! \Pr\l(\delta^{[mn]}_t\geq \tau_1\!\mid\! \delta_{t-1}^{[mn]} = b_1\r) \nn\\
&\Pr\l(g^{[mn]}_t\geq \tau_2 \!\mid\! g_{t-1}^{[mn]} = a_2\r)
 \!\geq\! \Pr\l(g_t^{[mn]}\geq \tau_2 \!\mid\! g_{t-1}^{[mn]} = b_2\r)\nn
\end{align}
if $a_1\geq b_1$ and $a_2\geq b_2$, where $0 \leq \tau_1 \leq 1$ and $\tau_2 \geq 0$. 
\end{assumption}
\noindent The above assumption states that given no feedback, large CSIT error and channel power in the current slot are likely to stay large in the next  slot due to channel temporal correlation.

\begin{assumption}\label{AS:EspCorr:Convex} For $B^{[mn]}_{t-1}>0$, the  conditional distribution $\Pr\l(\delta^{[mn]}_{t}\geq \tau\mid B^{[mn]}_{t-1}\r)$   is a monotone decreasing and convex function of $B^{[mn]}_{t-1}$. 
\end{assumption}
\noindent Note that upon feedback,  $\delta^{[mn]}_{t}$ is independent of $\delta^{[mn]}_{t-1}$ as a result of  \eqref{Eq:TxCSI}. 

Finally, for the limiting case of high mobility, channels are assumed to follow independent block-fading channels. For this case, Assumption~\ref{AS:SD} and \ref{AS:EspCorr:Convex} are trivial and not required in the analysis.

\subsection{Performance Metric}
The objective for designing the distributed feedback controller  at each receiver is to minimize the average  interference power. For receiver $m$, this metric is given as 
\begin{equation}\label{Eq:Metric}
\bar{I}^{[m]} =  \lim_{T\rightarrow\infty}\frac{1}{T}\E\l[\sum_{t=1}^T\sum_{\substack{n=1\\n \neq m}}^KI_t^{[mn]}\r]
\end{equation}
with $I_t^{[mn]}$ given in \eqref{Eq:InterfPwr:Cmp}. Minimizing $\bar{I}^{[m]} $ suppresses the system performance degradation  caused by  quantizing feedback CSI e.g.,  the throughput loss  in  the following example.  

\begin{example}\emph{ Let $S^{[m]}_t$ and $I^{[m]}_t$ denote the signal and interference power received  at receiver $m$ in slot $t$, respectively. Assuming Gaussian signaling and high mobility, the throughput loss of the $m$-th data link is given as \cite{Jindal:MIMOBroadcastFiniteRateFeedback:06}
\begin{align}
\Delta R &= \E\l[ \log_2\l(1 + \frac{S^{[m]}_t}{\sigma^2 }\r)  - \log_2\l(1 + \frac{S^{[m]}_t}{\sigma^2 + I_t^{[m]}}\r) \r]\nn\\
&\leq \E\l [\log_2\l(1 + \frac{I_t^{[m]}}{\sigma^2}\r) \r]\nn\\
&\leq  \log_2\l(1 + \frac{\bar{I}^{[m]}}{\sigma^2}\r) \label{Eq:RateLoss:Ub}
\end{align}
where $\sigma^2$ is the variance of a sample of the  additive-white-Gaussian-noise process and \eqref{Eq:RateLoss:Ub} uses Jensen's inequality. It can be observed from \eqref{Eq:RateLoss:Ub} that minimizing an upper bound on the throughput loss is equivalent to minimizing $\bar{I}^{[m]}$. }
\end{example}


\section{Problem Formulation}\label{Section:ProbForm}
The design of the feedback controller is formulated as a stochastic optimization problem under  an average  sum-feedback constraint. 

The cost function and state space for feedback control are defined as follows. To this end, the channel shape $\bs^{[mn]}_t$ is decomposed as
\begin{align}
\bs^{[mn]}_t &= \sqrt{1- \check{\delta}_t^{[mn]}}  \check{\bu}^{[mn]}_t + \sqrt{ \check{\delta}_t^{[mn]}} \check{\bq}^{[mn]}_t\label{Eq:S:Decompose}\\
&= \sqrt{1- \delta_t^{[mn]}}  \bu^{[mn]}_t + \sqrt{\delta_t^{[mn]}} \bq^{[mn]}_t\nn
\end{align}
where $\bq^{[mn]}_t$ and $\check{\bq}^{[mn]}_t$ are unitary vectors orthogonal to $\bu^{[mn]}_t $ and $\check{\bu}^{[mn]}_t$, respectively. Based on the above decomposition, we can define the channel parameters  $\beta_t^{[mn]} = \l| (\bff^{[n]}_t)^\dagger  \bq^{[mn]}_t\r|^2$ and $\check{\beta}_t^{[mn]} = \l|(\bff^{[n]}_t)^\dagger  \check{\bq}^{[mn]}_t\r|^2$. Using these parameters and substituting \eqref{Eq:InterfPwr:Cmp} allow $\bar{I}^{[m]}$ in \eqref{Eq:Metric} to be written as 
\begin{equation}\label{Eq:SumIntf:PerRX}
\! \bar{I}^{[m]} = \lim_{T\rightarrow\infty}\frac{1}{T}\E\l[\sum_{t=1}^T\sum_{n\neq m}\E\l[g_t^{[mn]}\check{\beta}_t^{[mn]}\check{\delta}_t^{[mn]}\mid x^{[m]}_t, \mathbf{B}_t^{[m]} \r]\r]
\end{equation}
where $x^{[m]}_t$ and $\mathbf{B}_t^{[m]} = \l\{B_t^{[mn]}\mid n \neq m\r\}$ denote the state and decision of the feedback controller at receiver $m$, respectively. From \eqref{Eq:SumIntf:PerRX}, the controller's state should be intuitively chosen to comprise all channel parameters $\l\{\beta^{[mn]}_t, g^{[mn]}_t, \delta^{[mn]}_t \mid n \neq m\r\}$. 
However, this results in the coupling of  feedback control  at different receivers. Specifically, by definition, the state parameter $\beta_t^{[mn]}$ depends on the beamformer $\bff^{[n]}_t$ that in turn is computed based the feedback CSI from the receivers $\{m\mid m \neq n\}$,  and each of these receivers also controls other  beamformers. Therefore, to enable distributive feedback control, $\beta_t^{[mn]}$ is excluded from the controller's state and hence $x^{[m]}_t = 
\l\{g^{[mn]}_t, \delta^{[mn]}_t\mid n \neq m\r\}$ where each parameter pair $(g_t^{[mn]}, \delta_t^{[mn]})$ depends only on the single channel $\bh^{[mn]}_t$. Since all  channel vectors are  isotropic and that feedback control is  independent of $\l\{\beta_t^{[mn]}\r\}$, $\beta_t^{[mn]}$ and $\check{\beta}_t^{[mn]}$ can be shown to be 
beta$(1, L-2)$ random variables and independent with $(g_t^{[mn]}, \delta_t^{[mn]})$ \cite{YooJindal:FiniteRateBroadcastMUDiv:2007}. This simplifies \eqref{Eq:SumIntf:PerRX} as 
\begin{equation}
\begin{aligned}
\bar{I}^{[m]} =&\frac{1}{L-1}\lim_{T\rightarrow\infty}\frac{1}{T}\times\\
& \E\l[\sum_{t=1}^T\sum_{n\neq m }\E\l[g_t^{[mn]}\check{\delta}_t^{[mn]}\mid x^{[m]}_t, \mathbf{B}_t^{[m]}\r]\r]. 
\end{aligned}\label{Eq:SumIntf:PerRX:a}
\end{equation}

Consider a stationary feedback-control policy and  an average sum-feedback constraint where the total average feedback rate for each receiver is no more than $\bar{b}>0$. The optimal policy $\mathcal{P}_m^\star: x^{[m]}_t\rightarrow\mathbf{B}^{[m]}_t$ at receiver $m$ solves the following infinite-horizon stochastic optimization problem: \footnote{It is also possible to formulate the optimal feedback control as a finite-horizon stochastic optimization problem. However, the current infinite-horizon formulation not only leads to a stationary control policy but also allows tractable analysis of the policy structure. Furthermore, the infinite-horizon approximation is justified by that a communication session in  a practical system such as 3GPP LTE usually spans over thousands of frames.} 

\begin{equation} \label{Eq:OpProb:AvFb:a}
\begin{aligned}
&\textrm{minimize:}&& \bar{I}^{[m]}(\mathcal{P}_m) \\
&\textrm{subject to :}&& \lim_{T\rightarrow\infty}\frac{1}{T}\E\l[\sum_{t=1}^T\sum_{ n\neq m } B^{[mn]}_t\r]\leq \bar{b}.
\end{aligned}
\end{equation}
Due to symmetric channel distributions, it is optimal for receiver $m$ to equally split $\bar{b}$ for $(K-1)$ feedback links. 
Consequently, the optimization of $\mathcal{P}_m$ reduces to that of the policy $\mathcal{P}$ for controlling an arbitrary single feedback link. To simplify notation, define the random process $(g_t, \delta_t, \check{\delta}_t, B_t)\sim (g_t^{[mn]}, \delta_t^{[mn]}, \check{\delta}_t^{[mn]}, B^{[mn]}_t)$ where ``$\sim$" represents equality in distribution, and the metric 
\begin{equation}
J = \lim_{T\rightarrow\infty}\frac{1}{T}\sum_{t=1}^T\E\l[g_t\check{\delta}_t\mid x_t, B_t\r] 
\end{equation}
where $x_t=(g_t, \delta_t)$ is the state of a single-feedback-link controller. 
Then $\mathcal{P}:x_t \rightarrow B_t$ can be designed by solving the following optimization problem: 
\begin{equation} \label{Eq:OpProb:AvFb}
\begin{aligned}
&\textrm{minimize:}&& J(\mathcal{P}) \\
&\textrm{subject to :}&& \lim_{T\rightarrow\infty}\frac{1}{T}\E\l[\sum_{t=1}^TB_t\r]\leq \frac{\bar{b}}{K-1}. 
\end{aligned}
\end{equation}

{
\section{The Optimal Feedback-Control Policy}\label{Section:FBControl:LowMob} In this section, we derive the optimal feedback-control policy for general mobility.  }
Given channel Markovity, the optimization problem in  \eqref{Eq:OpProb:AvFb} can be transformed into a stochastic optimization problem as follows.  By applying Lagrangian-multiplier theory, there exists a Lagrangian multiplier $\lambda > 0$ such that the optimal  policy $\mathcal{P}^\star$ that solves \eqref{Eq:OpProb:AvFb} also minimizes the following Lagrangian function: 
\begin{equation}\label{Eq:Lagrangian}
\mathcal{L}(\mathcal{P}) = \lim_{T\rightarrow\infty}\frac{1}{T}\E\l[\sum_{t=1}^T \l(\E\l[g_t\check{\delta}_t\mid x_t, B_t\r] + \lambda B_t\r)\r]. 
\end{equation}
Minizing $\mathcal{L}(\mathcal{P})$ is an average-cost stochastic optimization problem with a continuous state space. Though there exists no systematic method for solving this problem, it can be approximated by a discrete-space counterpart whose solution can be  computed efficiently using \emph{dynamic programming} \cite{Bertsekas07:DynamicProg}. The required state-space discretization is discussed and the resultant  optimal feedback-control policy analyzed in the following subsections.

\subsection{State-Space Discretization}
The spaces of the feedback-controller's state parameters $g_t$ and $\delta_t$ are discretized separately.  The set $\mathcal{G} = \{g_t \geq 0\}$ is partitioned into $M$ line segments  
$[\tilde{g}_1, \tilde{g}_2)$, $[\tilde{g}_2, \tilde{g}_3)$, $\cdots$, $[\tilde{g}_{M}, \infty)$ with $\tilde{g}_1=0$ and $0< \tilde{g}_1 < \tilde{g}_2<\cdots <\tilde{g}_{M}$.  These line segments are represented by a set of $M$ grid points $\hat{\mathcal{G}} = \{\bar{g}_m\}$ with $\bar{g}_m \in [\tilde{g}_{m}, \tilde{g}_{m+1})$. Specifically, $g_t \in \mathcal{G}$ is mapped to $\bar{g}_m$ if $g_t$ lies in the $m$-th line segment. Similarly, we divide the set $\mathcal{D} = \{0\leq \delta_t\leq 1\}$ into $N$ line segments $[\tilde{\delta}_1, \tilde{\delta}_2)$, $[\tilde{\delta}_2, \tilde{\delta}_3)$, $\cdots$, $[\tilde{\delta}_{N}, 1]$ with $\tilde{\delta}_1=0$ and $0< \tilde{\delta}_1 < \tilde{\delta}_2<\cdots <\tilde{\delta}_{N} < 1$ and represent these segments using a set of $N$ grid points $\hat{\mathcal{D}} = \{\bar{\delta}_n\}$ with $\bar{\delta}_n\in [\tilde{\delta}_n, \tilde{\delta}_{n+1})$. The optimization of the grid points $\hat{\mathcal{G}}$ and $\hat{\mathcal{D}}$  is outside the scope of this paper. Last, the discrete state space is represented by $\hat{\mathcal{X}} = \hat{\mathcal{G}}\times \hat{\mathcal{D}}$. 

The discretized version of the controller state $x_t$   is denoted as $\hat{x}_t = \{\hat{g}, \hat{\delta}_t\}$. Given Assumption~\ref{AS:SD}, $\{\hat{g}_t\}$ and $\{\hat{\delta}_t\}$ are two Markov chains whose transition probabilities are obtained as follows. Let  $P_{n, \ell}(B)$ denote the probability for the transition of $\hat{\delta}$ from the state $n$ to $\ell$ given the feedback decision $B$.  Then $P_{n, \ell}$ can be written as 
\begin{equation}\label{Eq:TxProb:z}
P_{n, \ell}(B) = \Pr(\hat{\delta}_{t+1} = \bar{\delta}_\ell \mid \hat{\delta}_t = \bar{\delta}_n,   B_t = B)\nn
\end{equation}
where $1\leq n, \ell\leq N$. 
Similarly, let  $\tilde{P}_{m, k}$ denote the transition probability  for $\{\hat{g}_t\}$, which is given as
\begin{equation}
\tilde{P}_{m, k} = \Pr(\hat{g}_{t+1}= \bar{g}_k \mid \hat{g}_t =\bar{g}_m)\nn
\end{equation}
where $1\leq m, k\leq M$. Note that given $B$, $P_{n, \ell}(B)$ and $\tilde{P}_{m, k}$ are independent as a result of channel isotropicity.   Last, the transition kernel for the controller-state Markov chain $\{\hat{x}_t\}$ is $\{\tilde{P}_{m, k}\}\times \{P_{n, \ell}(B)\}$.

\subsection{The Structure of the Optimal Feedback-Control Policy}
{
The stochastic optimization problems for feedback control with the discrete state space $\hat{\mathcal{X}}$ are formulated as follows. Define the corresponding feedback-control policy as $\hat{\mathcal{P}}:\hat{\mathcal{X}} \rightarrow\mathds{B}$. The matching average cost function  $\hat{\mathcal{L}}$ is modified from  \eqref{Eq:Lagrangian} as
\begin{equation}
\hat{\mathcal{L}}(\hat{\mathcal{P}}) =  \lim_{T\rightarrow\infty}\frac{1}{T}\E\l[\sum_{t=1}^T G(\hat{x}_t, B_t) \r]
\end{equation}
where $G(\hat{x}_t, B_t) $ is the cost-per-stage obtained using \eqref{Eq:TxCSI} as 
\begin{equation}\label{Eq:CostPerStage}
G(\hat{x}_t, B_t)  = \l\{
\begin{aligned}
&\hat{g}_t \E\l[\epsilon_t \mid B_t \r] + \lambda B_t, && B_t \geq 0 \\
&\hat{g}_t \hat{\delta}_t, && B_t = 0.
\end{aligned}
\r.
\end{equation}
Note that the minimum cost $\hat{\mathcal{L}}^\star$ converges to $\min_{\mathcal{P}}\mathcal{L}(\mathcal{P})$ as $N, M \rightarrow \infty$ provided that the grid points are suitably chosen \cite{Bertseka:ConvergeDiscretDynamicProg:75, HuangLau:EventDrivenFeedbackControlBeamforming}. 
The optimal policy $\hat{\mathcal{P}}^\star$  can be computed efficiently using \emph{policy iteration} \cite{Bertsekas07:DynamicProg}. The analysis of $\hat{\mathcal{P}}^\star$ is made tractable by considering a discounted-cost problem. Specifically,  given a discount factor $\rho \in (0, 1)$ and the initial state $\hat{x}_0$,   a stationary  feedback-control policy $\hat{\mathcal{P}}_\rho: \hat{\mathcal{X}}\rightarrow \mathds{B}$ is designed by minimizing the discounted cost function 
\begin{equation} \label{Eq:DistCost}
\mathcal{V}_\rho(\hat{\mathcal{P}}_\rho, \hat{x}_0) =  \sum_{t=0}^{\infty} \rho^t G(\hat{x}_t, B_t). 
\end{equation}
The optimal policy $\hat{\mathcal{P}}_\rho^\star$ and minimum cost $\mathcal{V}^\star_\rho$  converge to their average-cost counterparts as: $\hat{\mathcal{P}}^\star = \lim_{\rho\rightarrow 1}\hat{\mathcal{P}}^\star_\rho$  and $\hat{\mathcal{L}}^\star = \lim_{\rho\rightarrow 1} (1-\rho) \mathcal{V}^\star_\rho(\hat{x}_0)$ for arbitrary $\hat{x}_0$ \cite{Bertsekas07:DynamicProg}.  }

The discounted-cost problem allows simpler analysis as $\mathcal{V}_\rho^\star$ satisfies the following Bellman's equation: 
\begin{equation}\label{Eq:Bellman}
\mathcal{V}^\star_\rho(\hat{x}_t) = \mathsf{F}\mathcal{V}^\star_\rho(\hat{x}_t), \qquad \forall \ \hat{x}_t
\end{equation}
where $\mathsf{F}$ is the dynamic-programming operator and defined for a given function $q:\hat{\mathcal{X}}\rightarrow \mathds{R}$ as 
\begin{equation}\label{Eq:DPOperator}
\mathsf{F} q (\hat{x}_t) = \min_{B\in\mathds{B}}\l\{G(\hat{x}_t, B) + \rho\E\l[q(\hat{x}_{t+1})\mid \hat{x}_t, B\r]\r\}. 
\end{equation}
Though solving Bellman's equation analytically is infeasible, we can derive  from this equation some properties   of the optimal policy as follows. Several auxiliary results are obtained as shown in the following two lemmas.  First, the monotonicity  of $\mathcal{V}^\star_\rho(\hat{x})$ depends on if the following function is negative or nonnegative
\begin{equation}
\begin{aligned}
f(\bar{g}_k, \bar{\delta}_\ell) =  \mathcal{V}^\star_\rho(\bar{g}_k, \bar{\delta}_\ell)& - \mathcal{V}^\star_\rho(\bar{g}_k, \bar{\delta}_{\ell-1}) - \\
& \mathcal{V}^\star_\rho(\bar{g}_{k-1}, \bar{\delta}_\ell)+ \mathcal{V}^\star_\rho(\bar{g}_{k-1}, \bar{\delta}_{\ell-1}). 
\end{aligned}\label{Eq:f:Fun}
\end{equation}
with $\mathcal{V}^\star_\rho(\bar{g}_{k}, \bar{\delta}_\ell)=0$ if either $k=0$ or $\ell =0$. 
\begin{lemma}\label{Lem:Fun:f} \emph{The function $f(\hat{x})$ is nonnegative for all $\hat{x} \in \hat{\mathcal{X}}$. }
\end{lemma}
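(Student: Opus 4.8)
The plan is to establish the sign of $f(\hat{x})$ by a value-iteration induction: define $\mathcal{V}^{(0)}_\rho \equiv 0$ and $\mathcal{V}^{(j+1)}_\rho = \mathsf{F}\mathcal{V}^{(j)}_\rho$, and show that each iterate satisfies the corresponding discrete ``supermodularity-type'' inequality, namely that for all grid indices $k,\ell$,
\begin{equation}\label{Eq:fj}
\begin{aligned}
f^{(j)}(\bar{g}_k, \bar{\delta}_\ell) &= \mathcal{V}^{(j)}_\rho(\bar{g}_k, \bar{\delta}_\ell) - \mathcal{V}^{(j)}_\rho(\bar{g}_k, \bar{\delta}_{\ell-1})\\
&\quad - \mathcal{V}^{(j)}_\rho(\bar{g}_{k-1}, \bar{\delta}_\ell) + \mathcal{V}^{(j)}_\rho(\bar{g}_{k-1}, \bar{\delta}_{\ell-1}) \geq 0.
\end{aligned}
\end{equation}
Since $\mathsf{F}$ is a contraction on the finite state space, $\mathcal{V}^{(j)}_\rho \to \mathcal{V}^\star_\rho$ pointwise, so the inequality $f^{(j)} \geq 0$ passes to the limit and yields $f(\hat{x}) \geq 0$, which is the claim. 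The base case $j=0$ is immediate since $\mathcal{V}^{(0)}_\rho \equiv 0$.

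For the inductive step, I would fix $\hat{x}_t = (\bar{g}_k,\bar{\delta}_\ell)$ and examine the two branches of the cost-per-stage $G$ in \eqref{Eq:CostPerStage}. For the per-stage term alone: when $B_t > 0$, the contribution $\bar{g}_k \E[\epsilon_t \mid B_t] + \lambda B_t$ does not depend on $\bar{\delta}_\ell$, so its second difference in $(\bar{g},\bar{\delta})$ vanishes; when $B_t = 0$, the contribution $\bar{g}_k\bar{\delta}_\ell$ has second difference $(\bar{g}_k - \bar{g}_{k-1})(\bar{\delta}_\ell - \bar{\delta}_{\ell-1}) \geq 0$. So the per-stage cost is itself supermodular (with the boundary convention). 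For the continuation term $\rho\,\E[\mathcal{V}^{(j)}_\rho(\hat{x}_{t+1}) \mid \hat{x}_t, B]$, I would use that under Assumption~\ref{AS:SD} the chains $\{\hat{g}_t\}$ and $\{\hat{\delta}_t\}$ are independent given $B$ and each is stochastically monotone (the conditional CCDFs are ordered in the current state). Stochastic monotonicity of $\{\hat{\delta}_t\}$ transfers the inductive inequality \eqref{Eq:fj} through the $\hat\delta$-transition when $B=0$ (no feedback, so the $\hat\delta$-transition genuinely depends on the current $\hat\delta$), while for $B>0$ the next CSIT error is independent of $\hat\delta_t$, so that branch's continuation term again has vanishing second difference; the $\hat g$-transition is handled the same way on the other coordinate. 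The standard argument that the pointwise minimum over $B$ of a family of supermodular functions is supermodular (here in the two-dimensional discrete sense with the boundary convention) then closes the induction.

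The main obstacle I anticipate is twofold. First, the minimization over $B$ does not straightforwardly preserve \eqref{Eq:fj}: one must argue carefully that if $B^\ast$ is optimal at $(\bar{g}_k,\bar{\delta}_\ell)$ and $(\bar{g}_{k-1},\bar{\delta}_{\ell-1})$, and (possibly different) decisions are optimal at the two ``cross'' corners, then plugging $B^\ast$ into the cross corners only increases their values, giving the needed inequality — this requires spelling out which corner's optimizer is substituted where, and the boundary convention $\mathcal{V}^{(j)}_\rho(\bar g_k,\bar\delta_\ell)=0$ when $k=0$ or $\ell=0$ must be checked to be consistent with the $\mathsf{F}$-recursion (equivalently, one treats $f^{(j)}$ on the augmented index set and verifies the edge cases separately). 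Second, I need to confirm that stochastic monotonicity of the transition kernels, combined with the inductive hypothesis that $\mathcal{V}^{(j)}_\rho$ has nonnegative ``mixed'' second differences, indeed yields a nonnegative mixed second difference after averaging against the kernel — this is a discrete integration-by-parts / Abel-summation argument, and it relies on the product form of the kernel so that the two coordinates can be handled one at a time. These are the places where the argument needs genuine care; everything else is bookkeeping.
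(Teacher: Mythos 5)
Your overall route is the same as the paper's: a value-iteration induction starting from a function with nonnegative mixed second differences, supermodularity of the stage cost in \eqref{Eq:CostPerStage} (the $(\bar{g}_a-\bar{g}_{a-1})(\bar{\delta}_b-\bar{\delta}_{b-1})$ term), and an Abel-summation argument that pushes the inductive hypothesis through the product transition kernel via Assumption~\ref{AS:SD} --- this is precisely the computation leading to \eqref{Eq:J:Cond:Exp} and \eqref{Eq:f:Fun:DP:a} in Appendix~\ref{App:Fun:f}. So up to the continuation-term bookkeeping your plan matches the paper.

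The gap is the step you flag and then wave off: the minimization over $B$. ``The pointwise minimum over $B$ of a family of supermodular functions is supermodular'' is not a standard fact and is false in general, and your sketched repair points the wrong way: the cross corners $(\bar{g}_k,\bar{\delta}_{\ell-1})$ and $(\bar{g}_{k-1},\bar{\delta}_\ell)$ enter $f$ with \emph{negative} sign, so arguing that substituting $B^\ast$ there ``only increases their values'' weakens rather than establishes $f\geq 0$; one must instead upper-bound the cross-corner minima by suboptimal decisions and pair the four $Z$-values, and that pairing does not follow from per-$B$ supermodularity alone (at a state where the optimizer switches from no feedback to feedback as $\hat{g}$ grows, the minimum of a $\hat{\delta}$-increasing branch and a $\hat{\delta}$-independent branch generically has a negative mixed difference, so no generic preservation argument can close the induction). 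What the paper uses instead is extra problem structure: for $B>0$ both the stage cost and the $\hat{\delta}$-transition are independent of the current $\hat{\delta}$ (see \eqref{Eq:TxCSI}, \eqref{Eq:J:B}, \eqref{Eq:J:B:1m}), so $Z(\hat{g},\hat{\delta},B)$ does not depend on $\hat{\delta}$ and the feedback branch contributes a vanishing mixed difference, leaving only the $B^\star=0$ branch, where Assumption~\ref{AS:SD} plus the inductive hypothesis gives nonnegativity. That $\hat{\delta}$-independence of the feedback branch is the missing ingredient in your proposal. Be aware, too, that the paper's own computation of $\mathsf{F}f$ treats the four corner states as sharing a common decision $B^\star$; the mixed-optimizer configuration you anticipate (e.g., feedback optimal at the larger gain, no feedback at the smaller) is exactly the case not covered by either your argument or a verbatim reading of the appendix, so as written your proposal leaves the crux of the lemma unproved.
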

\begin{proof}The proof uses  the \emph{value iteration}, namely that for an arbitrary function $q:\hat{\mathcal{X}}\rightarrow \mathds{R}$,  the minimum discounted cost is \cite{Bertsekas07:DynamicProg}
\begin{equation}\label{Eq:ValIterate}
\mathcal{V}^\star_\rho(\hat{x}_t) = \lim_{n\rightarrow\infty} \mathsf{F}^n q (\hat{x}_t). 
\end{equation}
We show that if  $q$ is chosen to have the  property in the lemma statement, this property also holds for $\mathsf{F} q$ or in other words, remains unchanged by the dynamic-programming operation. Combining this fact and the value iteration in \eqref{Eq:ValIterate} proves the lemma. The details are provided in Appendix~\ref{App:Fun:f}. \end{proof}
Lemma~\ref{Lem:Fun:f} shows that $f(\hat{g}, \hat{\delta})$ is a monotone increasing function of $(\hat{g}, \hat{\delta})\in\hat{\mathcal{X}}$. Next, define the function
\begin{equation}\label{Eq:Z:Fun}
Z(\hat{x}_t, B) = G(\hat{x}_t, B) + \rho\E\l[\mathcal{V}^\star_\rho(\hat{x}_{t+1})\mid \hat{x}_t, B\r]. 
\end{equation}
Given the relation 
\begin{equation}\label{Eq:Relate:PZ}
\hat{\mathcal{P}}^\star_\rho(\hat{x}_t) = \arg\min_B Z(\hat{x}_t, B), 
\end{equation}
the structure of $\hat{\mathcal{P}}^\star_\rho$ depends directly on the characteristics of $Z$, which are specified in the following lemma. 
\begin{lemma} \label{Lem:JBFunc} \emph{$Z(\hat{x}_t, B)$ has the following properties. 
\begin{enumerate}
\item With $\hat{x}_t$ fixed  and for $B\in\mathds{B}$ and $B\neq 0$, $Z(\hat{x}_t, B)$ is a monotone decreasing and convex function of $B$;
\item With  $B$ fixed, $Z(\hat{g}_t, \hat{\delta}_t, B)$ is a monotone increasing function of $\hat{g}_t$ and also of $\hat{\delta}_t$ if $B =0$,  and of $\hat{g}_t$ and independent with $\hat{\delta}_t$ if $B > 0$.  
\end{enumerate}}
\end{lemma}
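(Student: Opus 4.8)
The plan is to read both properties off the definition $Z(\hat x_t,B)=G(\hat x_t,B)+\rho\,\E[\mathcal V^\star_\rho(\hat x_{t+1})\mid \hat x_t,B]$ in \eqref{Eq:Z:Fun}, using the shape of the per-stage cost $G$ in \eqref{Eq:CostPerStage}, the monotonicity/convexity stipulated by Assumptions~\ref{As:QErr}, \ref{AS:SD} and \ref{AS:EspCorr:Convex}, and structural properties of $\mathcal V^\star_\rho$. The one structural input needed is that $\mathcal V^\star_\rho(\hat g,\hat\delta)$ is coordinatewise monotone increasing, and the first step is to establish this by value iteration exactly as in the proof of Lemma~\ref{Lem:Fun:f}: starting from any coordinatewise-increasing $q$, one checks that $\mathsf F q$ in \eqref{Eq:DPOperator} is again coordinatewise increasing — for each fixed $B$ the per-stage cost $G(\cdot,B)$ is coordinatewise increasing, and $\E[q(\hat x_{t+1})\mid\hat x_t,B]$ inherits monotonicity in $\hat x_t$ from the stochastic-dominance relations of Assumption~\ref{AS:SD} when $B=0$ and from the fact that, when $B>0$, the transition of $\hat g$ is unaffected by feedback while that of $\hat\delta$ does not depend on $\hat x_t$ at all — so that the minimum over $B\in\mathds B$ of these functions is coordinatewise increasing; passing to the limit via \eqref{Eq:ValIterate} gives the claim for $\mathcal V^\star_\rho$. (This is really a simultaneous induction on the value-iteration index proving monotonicity of the finite-horizon value functions together with the finite-horizon analogue of Property~2.)

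For Property~1, fix $\hat x_t=(\hat g_t,\hat\delta_t)$ and take $B\in\mathds B\setminus\{0\}$, so that $G(\hat x_t,B)=\hat g_t\,\E[\epsilon_t\mid B]+\lambda B$ and, by \eqref{Eq:TxCSI}, the CSIT error is refreshed at the start of the next slot. By channel isotropicity, $\hat g_{t+1}$ and $\hat\delta_{t+1}$ are conditionally independent given $B$, the law of $\hat g_{t+1}$ does not depend on $B$, and the law of $\hat\delta_{t+1}$ depends on $B$ only through $\Pr(\delta_{t+1}\ge\tau\mid B)$, which by Assumption~\ref{AS:EspCorr:Convex} (inherited by the state-space discretization) is decreasing and convex in $B$. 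Writing $h_{g'}(\delta):=\mathcal V^\star_\rho(g',\delta)$, which is nondecreasing in $\delta$ by the first step, and expanding $\E[h_{g'}(\hat\delta_{t+1})\mid B]$ by a layer-cake (summation-by-parts) representation in terms of the tail $\Pr(\delta_{t+1}\ge\tau\mid B)$, one finds this expectation — and hence, after averaging over the $B$-independent law of $\hat g_{t+1}$, the cost-to-go $\rho\,\E[\mathcal V^\star_\rho(\hat x_{t+1})\mid\hat x_t,B]$ — to be decreasing and convex in $B$. Since $\hat g_t\,\E[\epsilon_t\mid B]$ is also decreasing and convex in $B$ by Assumption~\ref{As:QErr} (with $\hat g_t\ge0$), adding the linear feedback-cost term $\lambda B$ yields the shape of $Z(\hat x_t,\cdot)$ asserted in Property~1.

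For Property~2, fix $B$. If $B=0$, then $Z(\hat g_t,\hat\delta_t,0)=\hat g_t\hat\delta_t+\rho\,\E[\mathcal V^\star_\rho(\hat x_{t+1})\mid\hat x_t,0]$: the bilinear term is increasing in each of $\hat g_t$ and $\hat\delta_t$ (both nonnegative), and the cost-to-go is increasing in each by the stochastic monotonicity of Assumption~\ref{AS:SD} together with the monotonicity of $\mathcal V^\star_\rho$ from the first step. If $B>0$, then $G(\hat x_t,B)=\hat g_t\,\E[\epsilon_t\mid B]+\lambda B$ does not involve $\hat\delta_t$, and — since feedback refreshes the CSIT so that $\hat\delta_{t+1}$ is independent of $\hat\delta_t$ while $\hat g_{t+1}$ depends only on $\hat g_t$ — the cost-to-go is a function of $\hat g_t$ alone; hence $Z(\hat g_t,\hat\delta_t,B)$ is independent of $\hat\delta_t$, and it is increasing in $\hat g_t$ because $\hat g_t\,\E[\epsilon_t\mid B]$ is (as $\E[\epsilon_t\mid B]\ge0$) and the cost-to-go is increasing in $\hat g_t$ by Assumption~\ref{AS:SD} and the monotonicity of $\mathcal V^\star_\rho$.

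The main obstacle is the first step: pushing the monotonicity of the value function through the value-iteration recursion, since the $\min_B$ and the expectation over the discretized transition kernel are precisely the operations that can destroy structure. The points requiring care are that the discretization preserves the stochastic-dominance and convexity properties postulated in Assumptions~\ref{AS:SD} and~\ref{AS:EspCorr:Convex}, and that the stagewise objective is, for each fixed $B$, coordinatewise increasing in $\hat x_t$ (so that a minimum over a fixed index set of such functions is again coordinatewise increasing) — which is why monotonicity of $\mathcal V^\star_\rho$ and the statement of Property~2 must be proved together by induction on the horizon. Once this is settled, Properties~1 and~2 follow by the bookkeeping above; the details are deferred to the appendix.
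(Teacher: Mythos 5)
Your proof is correct, but it takes a different route from the paper's. The paper does not introduce coordinatewise monotonicity of $\mathcal{V}^\star_\rho$ at all: it reuses the Abel-summation representation \eqref{Eq:J:Cond:Exp} already derived in the proof of Lemma~\ref{Lem:Fun:f}, writes $Z$ explicitly as in \eqref{Eq:J:B:1m} — a nonnegative combination of the mixed second differences $f(\bar{\delta}_n,\bar{g}_k)\geq 0$ (Lemma~\ref{Lem:Fun:f}) weighted by the tail transition probabilities $\sum_{m\geq n}P_{b,m}(B)$ and $\sum_{\ell\geq k}\tilde{P}_{a,\ell}$ — and then reads off Property~1 from Assumptions~\ref{As:QErr} and \ref{AS:EspCorr:Convex} and Property~2 from Assumption~\ref{AS:SD}, with no additional induction. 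You instead prove, by a separate value-iteration invariance argument, that $\mathcal{V}^\star_\rho$ is coordinatewise increasing, and then obtain Property~1 through a layer-cake expansion of the continuation value (which, for $B>0$, is essentially the same tail-sum representation the paper uses, since $P_{b,m}(B)$ does not depend on $b$) and Property~2 from stochastic dominance plus your monotonicity result. Both arguments rest on exactly the same assumptions; what the paper's route buys is economy — note that Lemma~\ref{Lem:Fun:f}, with the zero-boundary convention $\mathcal{V}^\star_\rho(\bar{g}_0,\cdot)=\mathcal{V}^\star_\rho(\cdot,\bar{\delta}_0)=0$, already implies the coordinatewise monotonicity you prove from scratch (sum the nonnegative mixed differences along a row or column), so your extra induction is sound but redundant relative to the paper's toolkit — while your route is more self-contained in the standard monotone-MDP style and makes explicit the discretization-inheritance of Assumption~\ref{AS:EspCorr:Convex}, which the paper glosses over. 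One shared caveat: like the paper's own proof, your argument establishes convexity in $B$ cleanly but hand-waves the claimed monotone decrease of $Z$ in $B$, since the linear term $\lambda B$ in \eqref{Eq:CostPerStage} is increasing; this is an imprecision in the lemma statement itself rather than a defect of your proposal.
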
 
The proof is provided in  Appendix~\ref{App:JBFunc}. 
 Using Lemma~\ref{Lem:Fun:f} and \ref{Lem:JBFunc}, the key result of this section is obtained as follows. 
 
\begin{theorem} \label{Theo:Policy} \emph{The optimal feedback-control policy $\hat{\mathcal{P}}^\star$ has the following properties. 
\begin{enumerate}
\item If there exists $(a, b)\in \hat{\mathcal{X}}$ such that $\hat{\mathcal{P}}^\star(a, b) = 0$, $\hat{\mathcal{P}}^\star(a, \hat{\delta}) = 0$ for all $\hat{\delta} \in \hat{\mathcal{D}}$ and $\hat{\delta} \leq b$. 
\item If there exists $(a, b)\in \hat{\mathcal{X}}$ such that $\hat{\mathcal{P}}^\star(a, b) > 0$, $\hat{\mathcal{P}}^\star(a, \hat{\delta})=\hat{\mathcal{P}}(a, b) $  for all $\hat{\delta} \in \hat{\mathcal{D}}$ and $\hat{\delta} >  b$. 
\item If there exist $(a, b), (c, b) \in \hat{\mathcal{X}}$ such that $\hat{\mathcal{P}}^\star(a, b) > 0$ and $\hat{\mathcal{P}}^\star(c, b) > 0$, $\hat{\mathcal{P}}^\star(c, b) \geq \hat{\mathcal{P}}^\star(a, b) $  if $c \geq a$ and vice versa. 
\end{enumerate}}
\end{theorem}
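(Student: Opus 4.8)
The plan is to prove the three properties for the discounted-cost optimal policy $\hat{\mathcal{P}}^\star_\rho(\hat{x}_t)=\arg\min_{B\in\mathds{B}}Z(\hat{x}_t,B)$ of \eqref{Eq:Relate:PZ} and then transfer them to $\hat{\mathcal{P}}^\star=\lim_{\rho\to1}\hat{\mathcal{P}}^\star_\rho$: since $\hat{\mathcal{X}}$ is finite and $\mathds{B}$ is discrete, $\hat{\mathcal{P}}^\star_\rho$ coincides with $\hat{\mathcal{P}}^\star$ on all of $\hat{\mathcal{X}}$ once $\rho$ is close enough to $1$, so any structural property of $\hat{\mathcal{P}}^\star_\rho$ passes to the limit. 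Throughout I would fix a consistent selection rule, e.g.\ the largest minimizer of $Z(\hat{x}_t,\cdot)$, and write $(a,b)$ for a generic element of $\hat{\mathcal{X}}$.

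For property~1, suppose $\hat{\mathcal{P}}^\star(a,b)=0$, so $Z(a,b,0)\le Z(a,b,B)$ for every $B\ge1$. By Lemma~\ref{Lem:JBFunc}(2), $Z(a,\hat{\delta},B)$ is independent of $\hat{\delta}$ for $B\ge1$, hence equals $Z(a,b,B)$, while $Z(a,\hat{\delta},0)$ is nondecreasing in $\hat{\delta}$. Thus for $\hat{\delta}\le b$ and any $B\ge1$, $Z(a,\hat{\delta},0)\le Z(a,b,0)\le Z(a,b,B)=Z(a,\hat{\delta},B)$, so $B=0$ minimizes $Z(a,\hat{\delta},\cdot)$. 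Property~2 is the symmetric statement: if $\hat{\mathcal{P}}^\star(a,b)=B^\star>0$ then $Z(a,b,B^\star)\le Z(a,b,0)$, and for $\hat{\delta}>b$ monotonicity gives $Z(a,\hat{\delta},0)\ge Z(a,b,0)\ge Z(a,b,B^\star)=Z(a,\hat{\delta},B^\star)$; since $Z(a,\hat{\delta},\cdot)$ and $Z(a,b,\cdot)$ agree on $\mathds{B}\setminus\{0\}$, the minimizer of $Z(a,\hat{\delta},\cdot)$ over $\mathds{B}$ is attained at a positive value and equals $B^\star$.

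The substance of the theorem is property~3, and the plan is to show that, with $b$ fixed, $Z(\hat{g}_t,b,B)$ has \emph{decreasing differences} in $(\hat{g}_t,B)$ over $B\in\mathds{B}\setminus\{0\}$, i.e.\ $Z(\hat{g}_t,b,B')-Z(\hat{g}_t,b,B)$ is nonincreasing in $\hat{g}_t$ whenever $B'>B>0$. The cost-per-stage contribution to this difference is $\hat{g}_t\big(\E[\epsilon_t\mid B']-\E[\epsilon_t\mid B]\big)+\lambda(B'-B)$, which is affine in $\hat{g}_t$ with nonpositive slope by Assumption~\ref{As:QErr}. For the expected-future-cost contribution, I would use that for $B>0$ the conditional law of $\hat{\delta}_{t+1}$ depends only on $B$ (feedback makes $\delta_{t+1}$ independent of $\delta_t$) and the law of $\hat{g}_{t+1}$ depends only on $\hat{g}_t$; writing $Q_\ell(B)=\Pr(\hat{\delta}_{t+1}=\bar{\delta}_\ell\mid B)$ and $W(\bar{g}_k)=\sum_\ell\big(Q_\ell(B')-Q_\ell(B)\big)\mathcal{V}^\star_\rho(\bar{g}_k,\bar{\delta}_\ell)$, the increment $W(\bar{g}_k)-W(\bar{g}_{k-1})=\sum_\ell\big(Q_\ell(B')-Q_\ell(B)\big)\big[\mathcal{V}^\star_\rho(\bar{g}_k,\bar{\delta}_\ell)-\mathcal{V}^\star_\rho(\bar{g}_{k-1},\bar{\delta}_\ell)\big]$ is nonpositive, because Lemma~\ref{Lem:Fun:f} makes the bracketed term nondecreasing in $\ell$ while Assumption~\ref{AS:EspCorr:Convex} makes $\hat{\delta}_{t+1}$ stochastically smaller under $B'$ than under $B$. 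Hence $W$ is nonincreasing in $k$, and by the stochastic monotonicity of the $\hat{g}$-transition kernel (Assumption~\ref{AS:SD}) the term $\rho\,\E[\,W(\hat{g}_{t+1})\mid \hat{g}_t\,]$ is nonincreasing in $\hat{g}_t$; adding the two contributions gives the decreasing-differences property.

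Finally, decreasing differences yields the monotone policy by a standard interchange argument: take $c\ge a$, put $B_a=\hat{\mathcal{P}}^\star(a,b)>0$ and $B_c=\hat{\mathcal{P}}^\star(c,b)>0$, and suppose $B_a>B_c$; optimality of $B_a$ at $(a,b)$ gives $Z(a,b,B_a)-Z(a,b,B_c)\le0$, and decreasing differences applied to the pair $B_a>B_c$ gives $Z(c,b,B_a)-Z(c,b,B_c)\le Z(a,b,B_a)-Z(a,b,B_c)\le0$, so $B_a$ is at least as good as $B_c$ at $(c,b)$, contradicting that $B_c$ is the largest minimizer there; hence $\hat{\mathcal{P}}^\star(c,b)\ge\hat{\mathcal{P}}^\star(a,b)$, and the converse follows by swapping $a$ and $c$. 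I expect the decisive step to be the future-cost bound in property~3: one must exploit that feedback decouples $\delta_{t+1}$ from $\delta_t$ so the $\hat{\delta}$-dynamics enter only through $B$, then combine the supermodularity of $\mathcal{V}^\star_\rho$ from Lemma~\ref{Lem:Fun:f} with the two stochastic orderings of Assumptions~\ref{AS:SD}--\ref{AS:EspCorr:Convex} in the right order, all while verifying that the state-space discretization preserves these orderings.
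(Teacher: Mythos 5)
Your proposal is correct and, in substance, follows the paper's own route: pass to the discounted problem via \eqref{Eq:Relate:PZ}, use the $\hat{\delta}$-independence of $Z(\cdot,\cdot,B)$ for $B>0$ together with the monotonicity of $Z(\cdot,\cdot,0)$ in $\hat{\delta}$ for Properties 1)--2), and establish decreasing differences (submodularity) of $Z$ in $(\hat{g},B)$ for Property 3). Your derivation of the decreasing differences is the same computation as the paper's inequality \eqref{Eq:Delta:GZ+}, only organized through stochastic dominance of the transition kernels and the supermodularity of $\mathcal{V}^\star_\rho$ from Lemma~\ref{Lem:Fun:f}, rather than through the explicitly Abel-summed expression \eqref{Eq:J:B:1m}; you also correctly flag that the ordering of the $\hat{\delta}_{t+1}$-kernels in $B$ comes from Assumption~\ref{AS:EspCorr:Convex}, which the paper's appendix uses but does not cite at that step. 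The one genuine divergence is the final step of Property 3): the paper orders the minimizers by combining the forward and backward differences $\Delta^{+}Z$ and $\Delta^{-}Z$ in \eqref{Eq:ZIneq:1}--\eqref{Eq:ZIneq:5} with the convexity of $Z$ in $B$ from Lemma~\ref{Lem:JBFunc}, whereas you close with the standard monotone-comparative-statics interchange argument under a largest-minimizer tie-break, which needs only submodularity; this is slightly leaner (it dispenses with the convexity of $Z$ in $B$) and has the added virtue of making explicit the consistent selection rule that the theorem tacitly assumes when $\arg\min_B Z(\hat{x},B)$ is not unique.
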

The proof is presented in Appendix~\ref{App:Policy}. 
The structure  of  $\hat{\mathcal{P}}^\star$ as specified  in Theorem~\ref{Theo:Policy} is  illustrated in Fig.~\ref{Fig:Policy:Struct}, from which $\hat{\mathcal{P}}^\star$ is observed to be  opportunistic in nature. CSI feedback over a particular feedback link  is performed only when the corresponding CSIT error and/or interference channel gain are large. As a result, the optimal policy partitions the state space into the \emph{feedback} and \emph{no-feedback} regions similar to the on/off-feedback policy in \cite{HuangLau:EventDrivenFeedbackControlBeamforming}. The current policy that supports variable-rate feedback further partitions the feedback region into smaller regions and assigns them different numbers of feedback bits. Upon feedback, the number of feedback bits increases with the interference-channel gain. The  CSIT error observed prior to feedback affects the decision on if feedback should be performed but has no influence on the number of feedback bits upon feedback.  The reason is that the CSIT error after feedback is equal to the quantization error that is independent of CSIT error prior to feedback. 

\begin{figure}
\begin{center}
\includegraphics[width=7cm]{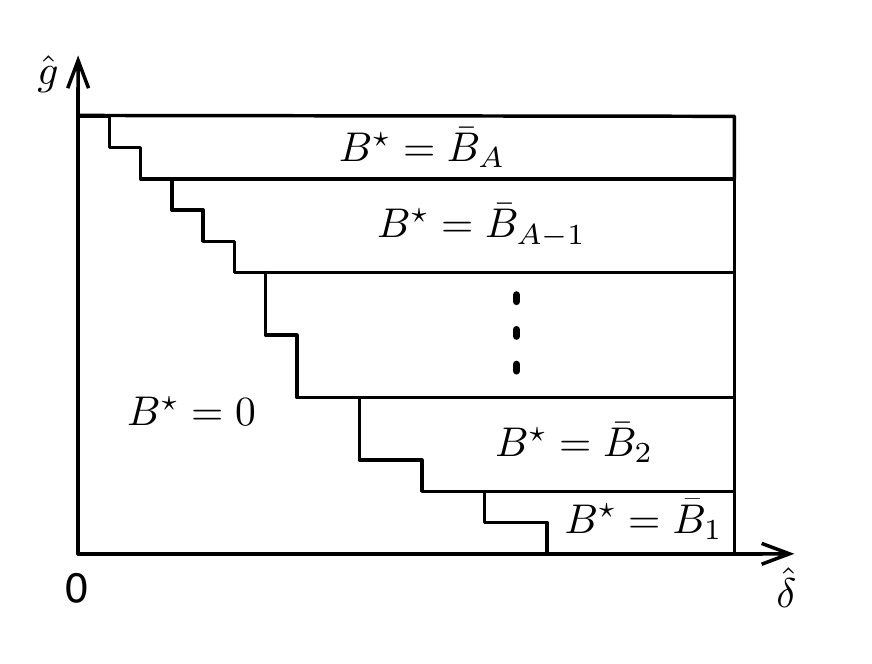}
\caption{The structure of the optimal feedback-control policy $\hat{\mathcal{P}}^\star$  where $0 \leq \bar{B}_1 \leq \cdots \leq \bar{B}_{A-1} \leq \bar{B}_A$ and  $\{\bar{B}_k\} \in \mathds{B}$.}  
\label{Fig:Policy:Struct}
\end{center}
\end{figure}

Intuitively, the feedback-link should be turned off less frequently when the interference-channel gain is large. In other words, the feedback-threshold function separating  the feedback and no-feedback regions should map larger values of $\hat{g}$ to smaller ones of  $\hat{\delta}$. However, proving this property requires more restrictive  assumptions on the channel temporal correlation than the current ones. 

The feedback control can be treated as the dual of bit loading (or adaptive modulation) over forward data links \cite{Goldsmith97, Chow95}. Both feedback control  and bit loading opportunistically allocate (CSI or data) bits over (feedback or forward) channels based on instantaneous (interference or data) CSI. Furthermore, both functions share the same objective of enhancing the system throughput. 

{ In wireless communication networks such as 3GPP-LTE, users are assigned dedicated (orthogonalized) feedback links. This approach incurs fast growing network overhead with the increasing popularity of cooperative transmission techniques such as multi-cell joint transmission \cite{Gesbert:MultiCellMIMOCooperativeNetworks:2010} or interference alignment \cite{CadJafar:InterfAlignment:2007}, for which the number of feedback links may  increase quadratically with the number of users.  Perhaps a more efficient approach is to allow multiple users to share a single feedback channel using e.g., a random access protocol. For this case, intelligent feedback control by receivers will alleviate feedback-traffic congestion and reduce the feedback delay for  CSI that is time sensitive. }

{
 
\subsection{The Computation of the Optimal Feedback-Control Policy} 
The  optimal feedback-control policy can be efficiently computed by  \emph{policy iteration} (see e.g.,  \cite{Bertsekas07:DynamicProg}). Each iteration involves  \emph{policy evaluation} and \emph{policy improvement}. The step of policy evaluation in the $i$-th iteration is to  compute  the corresponding average reward $\hat{\mathcal{L}}^{(i)}$ conditioned on a given policy $\hat{\mathcal{P}}^{(i-1)}$:
\begin{align}
\hat{\mathcal{L}}^{(i)} + \mathcal{U}^{(i)}_{m,n} &= G(\bar{g}_m, \bar{\delta}_n, \hat{\mathcal{P}}^{(i-1)}(\bar{g}_m, \bar{\delta}_n)) +\sum_{k,\ell}\mathcal{U}^{(i)}_{k,\ell}\tilde{P}_{m, k }\times\nn\\ 
&\qquad \qquad P_{n, \ell}(\hat{\mathcal{P}}^{(i-1)}(\bar{g}_m, \bar{\delta}_n)),\quad \forall \ m,n \label{Eq:PolIt:a}\\
\mathcal{U}^{(k)}_{1, 1} &= 0 \label{Eq:PolIt:b}
\end{align}
where $\l\{\mathcal{U}^{(i)}_{m,n}\r\}$ represents a set of scalars called \emph{differential rewards} and the constraint in \eqref{Eq:PolIt:b} ensures that the solution of \eqref{Eq:PolIt:a} is unique. In the ensuing  step of policy improvement, a new policy $\hat{\mathcal{P}}^{(i)}$ is computed using $\hat{\mathcal{L}}^{(i)}$ and $\l\{\mathcal{U}^{(i)}_{m,n}\r\}$ obtained by solving  \eqref{Eq:PolIt:a} and \eqref{Eq:PolIt:b}: 
\begin{equation}
\begin{aligned}
\hat{\mathcal{P}}^{(i)}(\bar{g}_m, \bar{\delta}_n) = \arg\max_{B\in\mathds{B}}& \biggl[G(\bar{g}_m, \bar{\delta}_n, B) +\\
& \sum_{k,\ell}\mathcal{U}^{(i)}_{k,\ell}\tilde{P}_{m, k}P_{n, \ell}(B)\biggr]
\end{aligned}
\end{equation}
for all $m$ and $n$. 
The above two steps are repeated till  the policy converges, namely $\hat{\mathcal{P}}^{(i+1)} = \hat{\mathcal{P}}^{(i)}$, yielding the optimal feedback-control policy.  As observed from simulation, the policy iteration converges typically within several iterations.  
}

{
\section{The Optimal Feedback-Control Policy: High Mobility}\label{Section:FBControl:HiMob}
In this section, we focus on the regime of high mobility  and derive more elaborate structural results  for the optimal feedback-control  policy  by directly solving the optimization problem \eqref{Eq:OpProb:AvFb} rather than relying on dynamic programing.

\subsection{The Structure of the Optimal Feedback-Control Policy}}
To simplify the solution of \eqref{Eq:OpProb:AvFb}, we consider the sphere-cap-quantized-CSI model in Example~\ref{Ex:SphereCap}, resulting in the optimal feedback-control policy  of the water-filling type as shown in the sequel.  This property   is expected to also hold for the  random-vector quantization in Example~\ref{Ex:RVQ} since the quantization-error expectations for both models have similar exponential forms (compare  \eqref{Eq:ExpErr:SCap} and \eqref{Eq:ExpErr:RVQ}).

Given independent block fading and a stationary feedback-control policy, the optimal feedback decisions in different slots are made independently. Consequently, $(g_t, \delta_t, B_t)$ have stationary distributions and are i.i.d. in different slots. To simplify notation, let $(g, \delta, B)$ represent a sample of  $\l\{g_t, \delta_t, B_t\r\}$ in an arbitrary slot. Using this notation and \eqref{Eq:ExpErr:SCap},  \eqref{Eq:OpProb:AvFb} can be rewritten as follows: 
\begin{equation} \label{Eq:OpProb:AvFb:min}
\begin{aligned}
&\underset{B}{\textrm{minimize:}}&& \E\l[g \min\l(\frac{L-1}{L}2^{-\frac{B}{L-1}}, \delta\r)\r]\\
&\textrm{subject to :}&& \E\l[B\r]\leq \frac{\bar{b}}{K-1}\\
&&&  B \in \mathds{B}
\end{aligned}
\end{equation}
where the $\min$ operator in the objective function accounts for the fact that 
feedback from a receiver to a particular interferer should be performed only if it reduces the expected CSIT error.   
Solving the problem in \eqref{Eq:OpProb:AvFb:min} analytically is difficult due to the constraint $B\in \mathds{B}$. To overcome this difficulty, the constraint $B\in \mathds{B}$ is relaxed as   $B\geq 0$ which approximates the case where many quantization resolutions are supportable.  The above optimization problem is modified accordingly as: 
\begin{equation} \label{Eq:OpProb:AvFb:min:a}
\begin{aligned}
&\underset{B}{\textrm{minimize:}}&& \E\l[g \min\l(\frac{L-1}{L}2^{-\frac{B}{L-1}}, \delta\r)\r]\\
&\textrm{subject to :}&& \E\l[B\r]\leq \frac{\bar{b}}{K-1}\\
&&&  B \geq 0. 
\end{aligned}
\end{equation}
Solving the above problem yields  the structure of the optimal feedback-control policy as described in the following proposition. 
\begin{proposition}\label{Prop:WaterFill:AvFb}\emph{
For high mobility, the optimal feedback-control  policy $\mathcal{P}^\star: \mathcal{X}\rightarrow \mathds{R}^+$ resulting from solving \eqref{Eq:OpProb:AvFb:min:a}
 is of the water-filling type:
\begin{equation}\label{Eq:WaterFill:AvFb}
\mathcal{P}^\star(g, \delta) =\l\{\begin{aligned} &\Upsilon - (L-1)\log_2\frac{1}{g}, && \delta \geq \Psi(g)\\
&0, && \textrm{otherwise}
\end{aligned}\r.
\end{equation}
where  $\Upsilon$ is the water level given  as 
\begin{equation}
\Upsilon = \frac{\bar{b}}{(K-1)\Pr(\delta \geq \Psi(g))} + (L-1)\E\l[\log_2\frac{1}{g}\mid \delta \geq \Psi(g)\r]. \nn
\end{equation}
The feedback-threshold function $\Psi: \mathcal{G}\rightarrow\mathcal{D}$ solves the following optimization problem: \footnote{The operator  $(a)^+$ for $ a\in\mathds{R}$ gives $a$ if $a \geq 0$ or otherwise $0$. }
\begin{equation} \label{Eq:OpProb:AvFb:c}
\begin{aligned}
\textrm{minimize:} &\quad \bar{I}^\star(\Psi)&\\
\textrm{subject to:}&\quad \Upsilon(\Psi) - (L-1)\log_2\frac{1}{\Psi^{-1}(1)}\geq\\
& \qquad \qquad \l((L-1)\log_2\frac{L-1}{L\Psi^{-1}(1)}\r)^+  
\end{aligned}
\end{equation}
where $\bar{I}^\star(\Psi)$  given below is the sum-interference power at any receiver achieved by $\mathcal{P}^\star$ given $\Psi$
\begin{equation}\label{Eq:IPwr:Min}
\begin{aligned}
\bar{I}^\star =\frac{1}{L-1} 2^{-\frac{\Upsilon}{L-1}}\Pr(\delta &\geq \Psi(g)) + \\
&\frac{1}{L-1}\E\l[g\delta \mid \delta < \Psi(g)\r]. 
\end{aligned}
\end{equation}
In addition, $\Psi(g)$  is a monotone decreasing function of $g$. }
\end{proposition}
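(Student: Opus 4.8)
The plan is to solve the relaxed convex program \eqref{Eq:OpProb:AvFb:min:a} directly via Lagrangian duality. First I would note that for fixed $(g,\delta)$ the inner objective $g\min\!\l(\frac{L-1}{L}2^{-B/(L-1)},\delta\r)$ is a convex, nonincreasing function of $B\geq 0$ (it is the pointwise minimum of an affine constant and a decreasing convex exponential, but convexity is preserved here because the two pieces cross exactly once, with the exponential lying above $\delta$ for small $B$ and meeting it from above), and the feedback-rate functional $\E[B]$ is linear; hence \eqref{Eq:OpProb:AvFb:min:a} is a convex problem in the decision map $B(g,\delta)$ and strong duality holds with some multiplier $\nu\geq 0$ on the average-rate constraint. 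The Lagrangian decouples across channel states, so the optimal $B(g,\delta)$ minimizes, pointwise, $g\min\!\l(\frac{L-1}{L}2^{-B/(L-1)},\delta\r)+\nu B$ over $B\geq 0$.

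Next I would carry out this pointwise minimization. For a state where feedback is active — i.e.\ where the exponential term is the binding one — the stationarity condition $\frac{d}{dB}\l[g\frac{L-1}{L}2^{-B/(L-1)}+\nu B\r]=0$ gives $2^{-B/(L-1)}=\frac{\nu L}{g\ln 2}$ up to constants, i.e.\ $B=(L-1)\log_2 g + \text{const}$, which is exactly the affine-in-$\log_2(1/g)$ form $\Upsilon-(L-1)\log_2\frac1g$ claimed in \eqref{Eq:WaterFill:AvFb}; the water level $\Upsilon$ is then pinned down by enforcing the average-rate constraint with equality, $\E[B\mathds{1}\{\text{active}\}]=\bar b/(K-1)$, which rearranges to the stated expression for $\Upsilon$ involving $\Pr(\delta\geq\Psi(g))$ and the conditional expectation of $\log_2(1/g)$. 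The no-feedback region is precisely the set of states where this candidate $B$ would be nonpositive or, more sharply, where feeding back does not reduce the expected CSIT error below the current $\delta$; that boundary defines $\Psi(g)$, and substituting the optimal $B$ back into the objective yields \eqref{Eq:IPwr:Min}. The constraint \eqref{Eq:OpProb:AvFb:c} on $\Psi$ is just the feasibility/consistency condition that the water-filling formula produce a nonnegative bit count at the threshold, i.e.\ the $(\cdot)^+$ clipping is respected at $\delta=1$.

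The monotonicity of $\Psi$ is where I would expect to spend the real effort. The claim is that larger interference-channel gain $g$ should lower the CSIT-error threshold above which feedback is triggered — intuitively, a strong interferer is worth correcting even for a modest residual error. I would establish this by a marginal-value argument: at the threshold state $(g,\Psi(g))$ the controller is indifferent between sending zero bits (incurring cost $g\delta/(L-1)$ per the second branch of the cost-per-stage) and sending the water-filling number of bits (incurring the exponential cost plus $\nu B$); writing this indifference equation $g\,\Psi(g)/(L-1)=g\frac{L-1}{L}2^{-B^\star(g)/(L-1)}/(L-1)+\nu B^\star(g)$ with $B^\star(g)=\Upsilon-(L-1)\log_2\frac1g$, one sees the left side grows linearly in $g$ while the right side grows only logarithmically in $g$ through $B^\star$, so $\Psi(g)$ must decrease to keep the balance — made rigorous by implicit differentiation of the indifference curve and checking the sign of $d\Psi/dg$. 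The main obstacle is handling the coupling through $\Upsilon$: the water level itself depends on $\Psi$ via $\Pr(\delta\geq\Psi(g))$ and the conditional expectation, so the indifference relation is really a fixed-point condition, and I would need to argue (using that $g$ and $\delta$ are independent and that the dependence of $\Upsilon$ on a pointwise perturbation of $\Psi$ is of second order, or via a monotone-comparative-statics / single-crossing argument on the full optimization \eqref{Eq:OpProb:AvFb:c}) that this global coupling does not overturn the local monotonicity. Alternatively, and perhaps more cleanly, I would invoke the structural result already proved in Theorem~\ref{Theo:Policy} together with the remark there that the stronger separation property needs extra temporal-correlation assumptions — but under independent block fading those degenerate, so the monotonicity of $\Psi$ should follow from the explicit closed form alone, and I would prefer to present it as a direct consequence of the two-piece cost structure rather than via dynamic programming.
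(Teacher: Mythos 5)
Your derivation breaks at the very first step: the per-state objective $g\min\bigl(\frac{L-1}{L}2^{-B/(L-1)},\delta\bigr)$ is \emph{not} convex in $B$. For any state with $\delta<\frac{L-1}{L}$ the function equals the constant $g\delta$ on $[0,B_0]$ with $B_0=(L-1)\log_2\frac{L-1}{L\delta}$ and equals the strictly decreasing exponential for $B>B_0$; the derivative jumps from $0$ to a negative value at $B_0$, which is a concave kink. The configuration you invoke to rescue convexity (the exponential lying above $\delta$ for small $B$ and crossing it once from above) is exactly the configuration that destroys it, since the pointwise minimum of two convex functions is not convex in general. Hence strong duality and the exactness of your pointwise Lagrangian characterization are not justified as stated. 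The paper's proof sidesteps this with a preliminary structural claim: at an optimum, for each $g$ the decision is of threshold type in $\delta$ (either $B^\star=0$ or the exponential branch is the one attaining the $\min$), which reduces \eqref{Eq:OpProb:AvFb:min:a} to the conditional problem \eqref{Eq:OpProb:AvFb:cd} on the feedback region $\{\delta\geq\Psi(g)\}$, where the objective is the pure exponential and genuinely convex; only then is the Lagrangian method applied (with positivity dropped) to obtain \eqref{Eq:WaterFill:AvFb} and $\Upsilon$ from the rate constraint. You need either that reduction step or a sufficiency argument (a policy minimizing the Lagrangian pointwise and meeting the rate constraint with equality is optimal for the constrained problem) that does not lean on convexity.

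The remaining pieces also diverge from the paper in ways that leave gaps relative to the statement being proved. In the proposition, $\Psi$ is \emph{not} pinned down by a Lagrangian indifference condition: it is the free variable of the outer problem \eqref{Eq:OpProb:AvFb:c}, whose constraint packages your two feasibility conditions ($B\geq 0$ and that feedback actually lowers the expected CSIT error, enforced at the smallest gain in the feedback region); your write-up must recover that formulation, and note that your own two descriptions of the no-feedback boundary (candidate $B$ nonpositive or no error reduction, versus indifference between the two cost branches) are different sets and need reconciling. For monotonicity, the paper argues from \eqref{Eq:IPwr:Min}: with $\Pr(\delta\geq\Psi(g))$ held fixed, the residual term $\E[g\delta\mid\delta<\Psi(g)]$ is minimized only if the no-feedback region collects the small-$g$ states, forcing $\Psi$ to decrease in $g$. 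Your indifference/implicit-differentiation route is a genuinely different argument; within the Lagrangian parametrization the water level is a constant fixed by the multiplier (stationarity gives $2^{-B/(L-1)}$ proportional to $\nu/g$), so the fixed-point coupling through $\Upsilon$ that worries you largely disappears, but your ``linear versus logarithmic growth'' comparison only establishes that $\Psi$ eventually decreases; the sign of $d\Psi/dg$ over the whole feedback region, including gains near $\Psi^{-1}(1)$, is asserted rather than verified, whereas the proposition claims monotonicity throughout.
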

The proof is provided in  Appendix~\ref{App:WaterFill:AvFb}. 
The above policy structure is consistent with that of the general solution as 
described by  Theorem~\ref{Theo:Policy} and its remarks. Moreover, the optimal feedback-control for high mobility is similar to the classic adaptive modulation algorithm  that  allocates data bits in time also based on water-filling \cite{Goldsmith97}. 

For a large average feedback rate $\bar{b} \gg 1$, $\Pr(\delta \geq \Psi(g))\approx 1$ and thus the minimum  average sum-interference power at an arbitrary receiver follows from \eqref{Eq:IPwr:Min} as
\begin{align}
\bar{I}^\star &\approx \frac{1}{L-1} 2^{-\frac{\Upsilon}{L-1}}\nn\\
&= c 2^{-\frac{\bar{b}}{(K-1)(L-1)}}\label{Eq:iPower:Min}
\end{align}
where $c$ is a constant. 
It can be observed from \eqref{Eq:iPower:Min} that $\bar{I}^\star$ decreases exponentially with increasing $\bar{b}$, where the slope is smaller for a larger number of links or  transmit antennas per transmitter.  In addition, the optimal number of feedback bits given in \eqref{Eq:WaterFill:AvFb} needs to be rounded to the nearest and smaller integer for implementation and this operation increases $\bar{I}^\star$ by a multiplicative  factor no larger than $2^{-\frac{1}{(K-1)(L-1)}}$. 

It is infeasible to obtain the feedback-threshold function  $\Psi$ analytically by solving the optimization problem in Proposition~\ref{Prop:WaterFill:AvFb}. Thus  computing $\Psi$ requires   a numerical search, which is used to obtain relevant simulation results in Section~\ref{Section:Simulation}.

{
Finally, we obtain some insight into the effect of quantizing  direct-link feedback  (feedback from a receiver to the intended transmitter) and justify its omission in the performance metric. Consider an arbitrary data link in the current MISO interference channel, where the transmit beamformer, channel-direction vector, and received interference power  are denoted as $\bff_0$, $\bs_0$ and $I_0$, respectively. The direct-link feedback  of the quantized version $\hat{\bs}_0$ of $\bs_0$ allows the transmitter to perform the maximum-ratio transmission under the constraint of zero-forcing beamforming \cite{Jindal:RethinkMIMONetwork:LinearThroughput:2008}. As a result, the corresponding effective channel gain after beamforming can be shown to be   $\phi(1-\zeta)$ where $\phi$ follows the chi-square distribution with $2(L-K+1)$ degrees of freedom and $\zeta$ is no larger than  the quantization error $\epsilon_0$ of $\bs_0$, where $\epsilon_0 = 1- |\hat{\bs}_0^\dagger\bs_0|^2$ \cite{Jindal:RethinkMIMONetwork:LinearThroughput:2008}. For a high SINR and small $\zeta$, throughput $R$ of the considered data link can be approximated as follows
\begin{eqnarray}
R &\approx& \E\l[\log_2\frac{\phi(1-\zeta)}{I_0}\r]\nn\\
&\approx& -\E[\zeta] -\E[\log_2I_0] + \E[\log_2 \phi]\nn\\
&\geq& -\E[\epsilon_0]  - \log_2\E[I_0] + \E[\log_2 \phi]. \label{Eq:LocalFeedback}
\end{eqnarray}
Assuming that $\hat{\bs}_0$ is generated  by a random vector quantizer, it follows from  \eqref{Eq:ExpErr:RVQ} that $\E[\epsilon_0] \approx a e^{-\frac{1}{L-1}\check{B}_0}$ where $\check{B}_0$ denotes the number of direct-link-feedback bits. Moreover, for high mobility and a large cooperative feedback rate, $\E[I_0]$ can be approximated by $\bar{I}^\star $ given in \eqref{Eq:iPower:Min}. Then it follows from \eqref{Eq:iPower:Min} and  \eqref{Eq:LocalFeedback} that 
\begin{eqnarray}
R &\geq& -a e^{-\frac{1}{L-1}\check{B}_0}  + \frac{\bar{b}}{(K-1)(L-1)} + \text{constant}.\label{Eq:LocalFeedback:a}
\end{eqnarray}
The first term at the right-hand side of \eqref{Eq:LocalFeedback:a} represents the throughput loss due to the direct-link-feedback error  and the second the throughput gain obtained by increasing the cooperative feedback rate. It can be observed that the effect of the direct-link-feedback error  diminishes exponentially with $\check{B}_0$ and hence omitted in the current analysis.  }

{
\subsection{Extension to Asymmetric Channel Distributions}\label{Section:AsymChan}}

In the preceding sections, all interference channels are assumed to follow identical distributions. In this section, we discuss feedback control for asymmetric interference channel distributions in terms of heterogeneous path losses and assuming high mobility for mathematical tractability. Let $d^{[mn]}$ denote the distance between receiver $m$ and transmitter $n$.  The average interference power at receiver $m$ can be written as
\begin{equation}
\bar{I}^{[m]} = \sum_{n \neq m } (d^{[mn]})^{-\alpha} \bar{I}^{[mn]}  
\end{equation}
where $\alpha$ is the path-loss exponent and 
\begin{equation}
\bar{I}^{[mn]}  
= \frac{1}{L-1}\E\l[ g_t^{[mn]} \min\l(\E\l[\epsilon^{[mn]}\mid B^{[mn]}_t\r], \delta_t^{[mn]}\r)\r]. \label{Eq:AvI:m}
\end{equation}

Given heterogeneous path losses, the uniform allocation of average feedback rates by each receiver  to different feedback channels is no longer optimal. Consequently, we should optimize the average feedback-rate allocation besides feedback-control over time. Specifically, the feedback-control optimization problem can be decomposed as: 
\begin{enumerate}

\item[--] Master problem (average feedback-rate allocation)
\begin{equation}\label{Eq:MasterProb}
 \begin{aligned}
&\underset{\{\bar{b}_{m, n}\}}{\textrm{minimize:}} &&  \sum_{\substack{n = 1\\ n \neq m }}^K \l(d^{[mn]}\r)^{-\alpha} \bar{I}^{[mn]}_{\min}\l(\bar{b}_{m, n}\r)\\
&\textrm{subject to:} &&  \sum_{\substack{n = 1\\ n \neq m }}^K \bar{b}_{m, n}\leq \bar{b}\\
&&&\bar{b}_{m, n} \geq 0 \ \forall \ m \neq n 
\end{aligned}
\end{equation}
where $\bar{I}^{[mn]}_{\min}\l(\bar{b}_{m, n}\r)$ solves the following sub-problem.

\item[--] Sub-problem (stochastic feedback control)
\begin{equation}\label{Eq:SubProb}
 \begin{aligned}
&\textrm{minimize:} &&  \bar{I}^{[mn]}(\mathcal{P}_{mn})\\
&\textrm{subject to:} && \E\l[B_t^{[mn]}\r]\leq \bar{b}_{m, n}
\end{aligned}
\end{equation}
where $\bar{I}^{[mn]}$ is given in \eqref{Eq:AvI:m} and $\mathcal{P}_{mn}$ denotes the stationary policy for controlling the feedback link from receiver $m$ to transmitter $n$. 
\end{enumerate}
Note that the sub-problem is identical to  \eqref{Eq:OpProb:AvFb} except for the difference in the maximum average feedback rates.  The above decomposed optimization problems have an unique solution as shown below. 
\begin{lemma}\label{Lem:Convex}\emph{$\bar{I}^{[mn]}_{\min}\l(\bar{b}_{m, n}\r)$ is a convex and monotone decreasing function over $\bar{b}_{m, n} \geq 0$. }
\end{lemma}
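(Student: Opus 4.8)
The plan is to show both properties directly from the optimization problem \eqref{Eq:OpProb:AvFb:min:a} that defines $\bar{I}^{[mn]}_{\min}$. Monotonicity is the easy direction: if $\bar{b}_{m,n}' \geq \bar{b}_{m,n}$, then any feedback-control policy feasible for the rate budget $\bar{b}_{m,n}$ is also feasible for $\bar{b}_{m,n}'$, since enlarging the right-hand side of the average-rate constraint only relaxes the feasible set. Hence the minimum of the (common) objective over the larger feasible set cannot exceed the minimum over the smaller one, giving $\bar{I}^{[mn]}_{\min}(\bar{b}_{m,n}') \leq \bar{I}^{[mn]}_{\min}(\bar{b}_{m,n})$.

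For convexity, I would argue by a standard perturbation/time-sharing argument. Fix two rate budgets $\bar{b}^{(0)}$ and $\bar{b}^{(1)}$ and, for a parameter $\theta\in[0,1]$, set $\bar{b}^{(\theta)} = \theta \bar{b}^{(1)} + (1-\theta)\bar{b}^{(0)}$. Let $\mathcal{P}^{(0)}$ and $\mathcal{P}^{(1)}$ be optimal (or $\eps$-optimal) policies for the two budgets; since under independent block fading the feedback decisions across slots are i.i.d.\ and a stationary policy is just a measurable map $B(\cdot)$ of the per-slot state $(g,\delta)$, one can construct a new stationary randomized policy that, independently in each slot, applies $\mathcal{P}^{(1)}$ with probability $\theta$ and $\mathcal{P}^{(0)}$ with probability $1-\theta$. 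By linearity of expectation, this policy has average feedback rate $\theta\,\E[B^{(1)}] + (1-\theta)\,\E[B^{(0)}] \leq \bar{b}^{(\theta)}$, so it is feasible for the $\bar{b}^{(\theta)}$-problem, and its objective value equals $\theta\,\bar{I}^{[mn]}_{\min}(\bar{b}^{(1)}) + (1-\theta)\,\bar{I}^{[mn]}_{\min}(\bar{b}^{(0)})$ because the cost function in \eqref{Eq:AvI:m} is also linear in the per-slot expectations. Therefore $\bar{I}^{[mn]}_{\min}(\bar{b}^{(\theta)}) \leq \theta\,\bar{I}^{[mn]}_{\min}(\bar{b}^{(1)}) + (1-\theta)\,\bar{I}^{[mn]}_{\min}(\bar{b}^{(0)})$, which is exactly convexity.

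Alternatively — and perhaps cleaner given that Proposition~\ref{Prop:WaterFill:AvFb} already solves the relaxed problem in closed form — one can substitute the water-filling solution \eqref{Eq:WaterFill:AvFb} back into the objective: the map $B\mapsto \min\big(\tfrac{L-1}{L}2^{-B/(L-1)},\,\delta\big)$ is convex and decreasing in $B$ for each fixed $\delta$ (it is the pointwise minimum of a convex decreasing exponential and a constant), and composing with the affine water-filling allocation $B = \big(\Upsilon - (L-1)\log_2 \tfrac1g\big)^+$ and taking the expectation $\E[g\,\cdot\,]$ preserves convexity in the single scalar $\Upsilon$; since $\Upsilon$ is itself affine in the rate budget on the active region, the composition is convex in $\bar{b}_{m,n}$. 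The main obstacle in this second route is bookkeeping the dependence of the threshold $\Psi(g)$ on $\bar{b}_{m,n}$, so I would favor the time-sharing argument, where the randomization sidesteps any nonconvexity introduced by the integer constraint $B\in\mathds{B}$ as well: the same construction works verbatim for the unrelaxed problem \eqref{Eq:OpProb:AvFb:min}, since a randomized policy need not put mass on a convex combination of integers. The only point requiring a line of care is that the randomized stationary policy indeed lies in the admissible policy class over which $\bar{I}^{[mn]}_{\min}$ is defined; this follows because the state space is the i.i.d.\ per-slot pair $(g,\delta)$ and randomized stationary policies are admissible for average-cost problems with i.i.d.\ disturbances.
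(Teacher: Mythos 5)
Your proposal is correct in substance but takes a genuinely different route from the paper. The paper's proof of Lemma~\ref{Lem:Convex} never invokes randomized time-sharing: it first shows that the per-slot cost $q(B)=\min\bigl(\E\bigl[\epsilon^{[mn]}\mid B\bigr],\delta_t^{[mn]}\bigr)$ is convex in $B$ (this is where Assumption~\ref{As:QErr} enters), and then uses a sample-path argument in which the two optimal bit sequences $\{B_t^x\}$ and $\{B_t^y\}$ generated on the \emph{same} channel realization are mixed pointwise, $B_t^z=\mu B_t^x+(1-\mu)B_t^y$; this mixture is itself a deterministic stationary policy meeting the intermediate rate budget, and convexity of $q$ bounds its cost by $\mu\bar I^{[mn]}_{\min}(\bar b_x)+(1-\mu)\bar I^{[mn]}_{\min}(\bar b_y)$. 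Your randomized time-sharing argument instead gets the convex combination of both rate and cost purely by linearity, so it needs no convexity of the per-slot cost and, as you note, tolerates the integer constraint $B\in\mathds{B}$ --- whereas the paper's pointwise mixture is generally non-integer and thus implicitly relies on the relaxation $B\geq 0$ of \eqref{Eq:OpProb:AvFb:min:a}. What the paper's route buys in exchange is that it never leaves the deterministic policy class over which $\bar I^{[mn]}_{\min}$ in \eqref{Eq:SubProb} is actually defined: this is precisely the point you flag and then dispatch in one line. Strictly speaking, your construction upper-bounds the optimum over \emph{randomized} stationary policies at the intermediate budget, so to conclude convexity of the function defined via deterministic maps $x_t\mapsto B_t$ you should either state that the admissible class includes randomization, or add a derandomization remark (e.g., that with the atomless continuous state $(g,\delta)$ any achievable rate--cost pair of a randomized policy can be matched by a deterministic one). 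Your monotonicity argument by nesting of feasible sets is fine; the paper does not even spell that part out.
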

The proof is presented in  Appendix~\ref{App:AsymChan}.   The following result holds given  the convexity of the master problem as a result of Lemma~\ref{Lem:Convex} and that of the sub-problem follows from the discussion  in Section~\ref{Section:FBControl:HiMob}. 
\begin{proposition}\label{Prop:AsymChan} \emph{Solving the master problem and sub-problem gives an unique optimal stationary feedback-control policy. }
\end{proposition}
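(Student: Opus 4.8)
The plan is to establish the claim by verifying the two standard ingredients that guarantee a unique optimizer for a convex program, and then invoking the decomposition structure. First I would observe that the master problem \eqref{Eq:MasterProb} is an optimization over the simplex $\{\bar{b}_{m,n}\geq 0, \sum_{n\neq m}\bar{b}_{m,n}\leq \bar b\}$, which is a compact convex set, with objective $\sum_{n\neq m}(d^{[mn]})^{-\alpha}\,\bar{I}^{[mn]}_{\min}(\bar{b}_{m,n})$; by Lemma~\ref{Lem:Convex} each summand is convex (and the nonnegative path-loss weights preserve convexity), so the master objective is convex and continuous on the feasible set. Existence of a minimizer is then immediate by compactness. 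For uniqueness I would argue in two stages: first that the optimal value of the rate allocation is attained at a unique point, and second that the policy itself is then pinned down by the sub-problem.

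For the first stage, the delicate point is that convexity of each $\bar{I}^{[mn]}_{\min}(\cdot)$ is not by itself strict convexity, so I would not claim the master objective is strictly convex on the whole simplex. Instead I would use the water-filling characterization from Proposition~\ref{Prop:WaterFill:AvFb}: for high mobility the sub-problem solution yields $\bar{I}^{[mn]}_{\min}(\bar b_{m,n})$ that is \emph{strictly} decreasing and strictly convex on the range where feedback is active (this follows from \eqref{Eq:iPower:Min}–\eqref{Eq:WaterFill:AvFb}, where the interference decays like $c\,2^{-\bar b_{m,n}/((K-1)(L-1))}$ type behavior, and from the fact that the threshold function $\Psi$ depends continuously and monotonically on the allocated rate). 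Strict convexity of each active term forces a unique optimal allocation $\{\bar b_{m,n}^\star\}$: if two distinct optimal allocations existed, their midpoint would give strictly smaller cost on any coordinate where they differ and active feedback occurs, contradicting optimality; and coordinates with zero allocation are handled by the monotonicity, since pushing rate into an inactive link cannot strictly help more than distributing it among active ones while respecting the KKT (water-filling-over-links) condition.

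For the second stage, once the optimal rates $\{\bar b_{m,n}^\star\}$ are fixed, the sub-problem \eqref{Eq:SubProb} is exactly the single-feedback-link problem \eqref{Eq:OpProb:AvFb} analyzed in Section~\ref{Section:FBControl:HiMob}, and Proposition~\ref{Prop:WaterFill:AvFb} gives its solution explicitly as the water-filling policy \eqref{Eq:WaterFill:AvFb} with a uniquely determined water level $\Upsilon$ and threshold $\Psi$. Since under independent block fading the policy acts on i.i.d.\ per-slot observations $(g,\delta)$ and the optimal decision is determined $(g,\delta)$-almost-everywhere by the explicit formula, the policy is unique up to modifications on a set of measure zero in the state space. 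Concatenating the uniquely determined allocation with the uniquely determined per-link policies yields a unique optimal stationary feedback-control policy, which is the assertion of Proposition~\ref{Prop:AsymChan}.

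\textbf{Main obstacle.} The crux is the passage from ``convex'' (Lemma~\ref{Lem:Convex}) to ``uniquely minimized.'' Lemma~\ref{Lem:Convex} as stated only gives convexity and monotonicity of $\bar{I}^{[mn]}_{\min}$, not strict convexity, so the uniqueness of the rate allocation genuinely requires going back into the water-filling structure of Proposition~\ref{Prop:WaterFill:AvFb} to upgrade convexity to strict convexity on the active region, and then carefully ruling out the boundary/degenerate cases where some $\bar b_{m,n}^\star = 0$. That boundary bookkeeping — showing the KKT conditions of the master problem (water-filling across links) admit only one solution even when the active set is a strict subset of the links — is the step I expect to be the most technical.
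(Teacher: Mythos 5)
Your overall route is the same as the paper's: the paper justifies Proposition~\ref{Prop:AsymChan} in a single sentence, namely that the master problem is convex by Lemma~\ref{Lem:Convex} and the sub-problem is convex by the discussion of Section~\ref{Section:FBControl:HiMob}; there is no separate appendix proof of the proposition itself (Appendix~\ref{App:AsymChan} proves Lemma~\ref{Lem:Convex}). So your first stage (compact convex feasible set, convex continuous objective, existence by compactness) reproduces exactly what the paper relies on, and your second stage (plugging the allocated rates back into the sub-problem and reading off the water-filling policy of Proposition~\ref{Prop:WaterFill:AvFb}) mirrors the paper's decomposition.

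Where you go beyond the paper is in noticing, correctly, that convexity alone does not yield \emph{uniqueness} of the minimizer, and in trying to upgrade Lemma~\ref{Lem:Convex} to strict convexity of $\bar{I}^{[mn]}_{\min}$ on the active region. That is the right concern, and it is the gap the paper itself glosses over; but your upgrade is not yet a proof. Your strict-convexity claim is sourced from \eqref{Eq:iPower:Min}, which is derived only in the regime $\bar{b}\gg 1$ with $\Pr(\delta\geq\Psi(g))\approx 1$; for general allocated rates the exact expression \eqref{Eq:IPwr:Min} has two competing terms and the threshold $\Psi$ is itself defined only implicitly through the optimization \eqref{Eq:OpProb:AvFb:c}, whose solution's continuity and monotonicity in the allocated rate (and indeed its own uniqueness) are never established in the paper. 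The same caveat applies to your second stage: calling $\Upsilon$ and $\Psi$ ``uniquely determined'' presupposes uniqueness of the solution of \eqref{Eq:OpProb:AvFb:c}, which is exactly the kind of fact that would need proof. So your proposal is more careful than the paper's one-line argument and identifies the genuinely technical point (strict convexity plus the KKT boundary bookkeeping across links), but as written those steps rest on the asymptotic approximation and on unproven regularity of $\Psi$, so they fill the paper's gap only heuristically rather than rigorously.
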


Next, we characterize the optimal feedback-control policy based on the quantizer model in Example~$1$ and for a large average sum-feedback rate per user.  For this case, using \eqref{Eq:iPower:Min} and given $\bar{b}_{m, n}$, the average interference power from transmitter $n$ to receiver $m$ can be approximated as 
\begin{eqnarray}
\bar{I}^{[mn]} &\approx& \frac{1}{L-1} 2^{-\E\l[\log_2\frac{1}{g}\r]} 2^{-\frac{\bar{b}_{m, n}}{L-1}}\end{eqnarray}
where $g$ follows the chi-square distribution. 
This approximation reduces the master problem as:
\begin{equation}
 \begin{aligned}
&\underset{\{\bar{b}_{m, n}\}}{\textrm{minimize:}} &&  \sum_{\substack{n = 1\\ n \neq m }}^K \l(d^{[mn]}\r)^{-\alpha} 2^{-\frac{\bar{b}_{m, n}}{L-1}}\\
&\textrm{subject to:} &&  \sum_{\substack{n = 1\\ n \neq m }}^K \bar{b}_{m, n}\leq \bar{b}\\
&&&\bar{b}_{m, n} \geq 0 \ \forall \ m \neq n. 
\end{aligned}\nn
\end{equation}
Solving the above constrained optimization problem using Lagrangian method yields that the optimal allocation of average feedback rates is of the water-filling type:
\begin{equation}\label{Eq:RateSplit}
\bar{b}^\star_{m,n} = \eta - \alpha (L-1)\log_2 d^{[mn]}, \quad n \neq m 
\end{equation}
where $\eta$ is the water-level given as
\begin{equation}
\eta = \frac{\bar{b}}{K-1} + \frac{\alpha(L-1)}{K-1}\sum_{n\neq m}\log_2d^{[mn]}. 
\end{equation}
For a sanity check, the substitution of equal distances $d^{[m1]} = d^{[m2]}=\cdots = d^{[mK]}$ into \eqref{Eq:RateSplit} gives equal-rate splitting: $\bar{b}^\star_{m,n} = \frac{\bar{b}}{K-1}$ for all $ n \neq m$. It can be observed from \eqref{Eq:RateSplit} that the optimal average feedback rate allocated by a receiver for suppressing the interference-power of a particular interferer decreases logarithmically with the increasing  distance between the interferer and the receiver. Relaxing the integer constraint on the numbers of feedback bits and combining \eqref{Eq:WaterFill:AvFb} and \eqref{Eq:RateSplit}, we can approximate the optimal number of feedback  bits $B_{m, n}^\star$ sent from receiver $m$ to transmitter $n$ with $ n\neq m$ as 
\begin{equation}\label{Eq:WaterFill:FB}
B_{m, n}^\star = \eta' - \alpha (L-1)\log_2 d^{[mn]} - (L-1)\log_2\frac{1}{g^{[mn]}} 
\end{equation}
where $\eta'$ is a constant. 
The above expression shows two-tier water-filling for allocating average feedback rates over multiple feedback links and for each link distributing feedback bits over different slots. 

{
The feedback scheme in \eqref{Eq:WaterFill:FB} is similar to those in \cite{RamyaHeath:AdaptiveBitPartionDelayedLimFb, KhoYu:LimFBCodebookDesignMultiuserCase:2010} in that the optimal number of feedback bits for a particular feedback link increases logarithmically with the channel gain of the corresponding forward link, despite the differences in settings (interference networks, cooperative multi-cell networks \cite{RamyaHeath:AdaptiveBitPartionDelayedLimFb}, or multiuser downlink systems \cite{KhoYu:LimFBCodebookDesignMultiuserCase:2010}) and metrics (sum interference power, throughput loss \cite{RamyaHeath:AdaptiveBitPartionDelayedLimFb} or total transmission power \cite{KhoYu:LimFBCodebookDesignMultiuserCase:2010}). The fundamental reason for the above similarity is that different performance optimization problems can be reduced to or approximated by one that minimizes a weighted sum of exponential functions of numbers of feedback bits under a constraint on the sum-feedback rate. }

\begin{figure}
\begin{center}
\hspace{-20pt}\subfigure[$f_d = 10^{-2}$ and $\bar{b} = 12$ bit/slot]{\includegraphics[width=8.5cm]{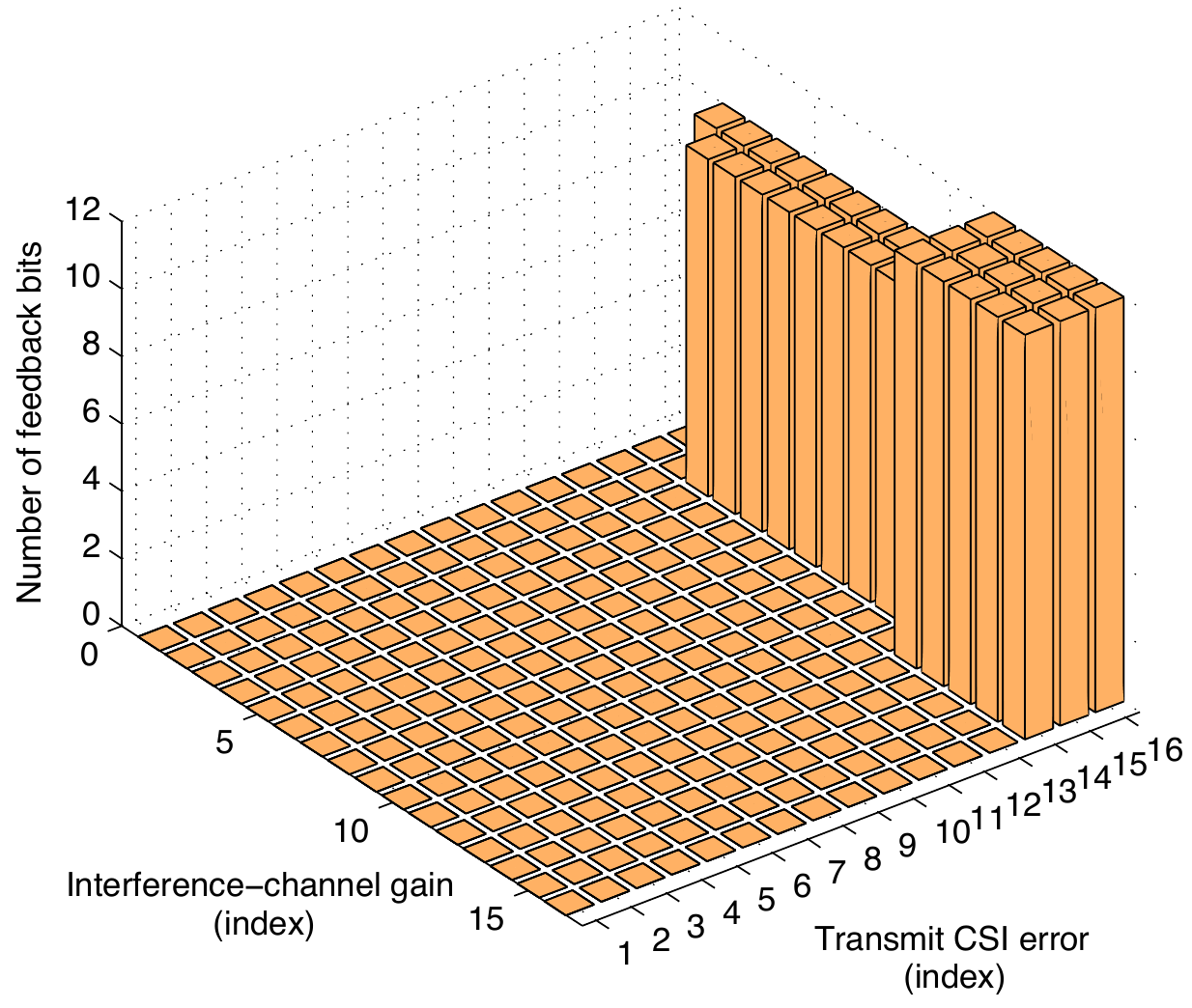}\hspace{-10pt}}
\subfigure[$f_d = 6\times10^{-3}$ and $\bar{b} = 36$ bit/slot]{\includegraphics[width=8.5cm]{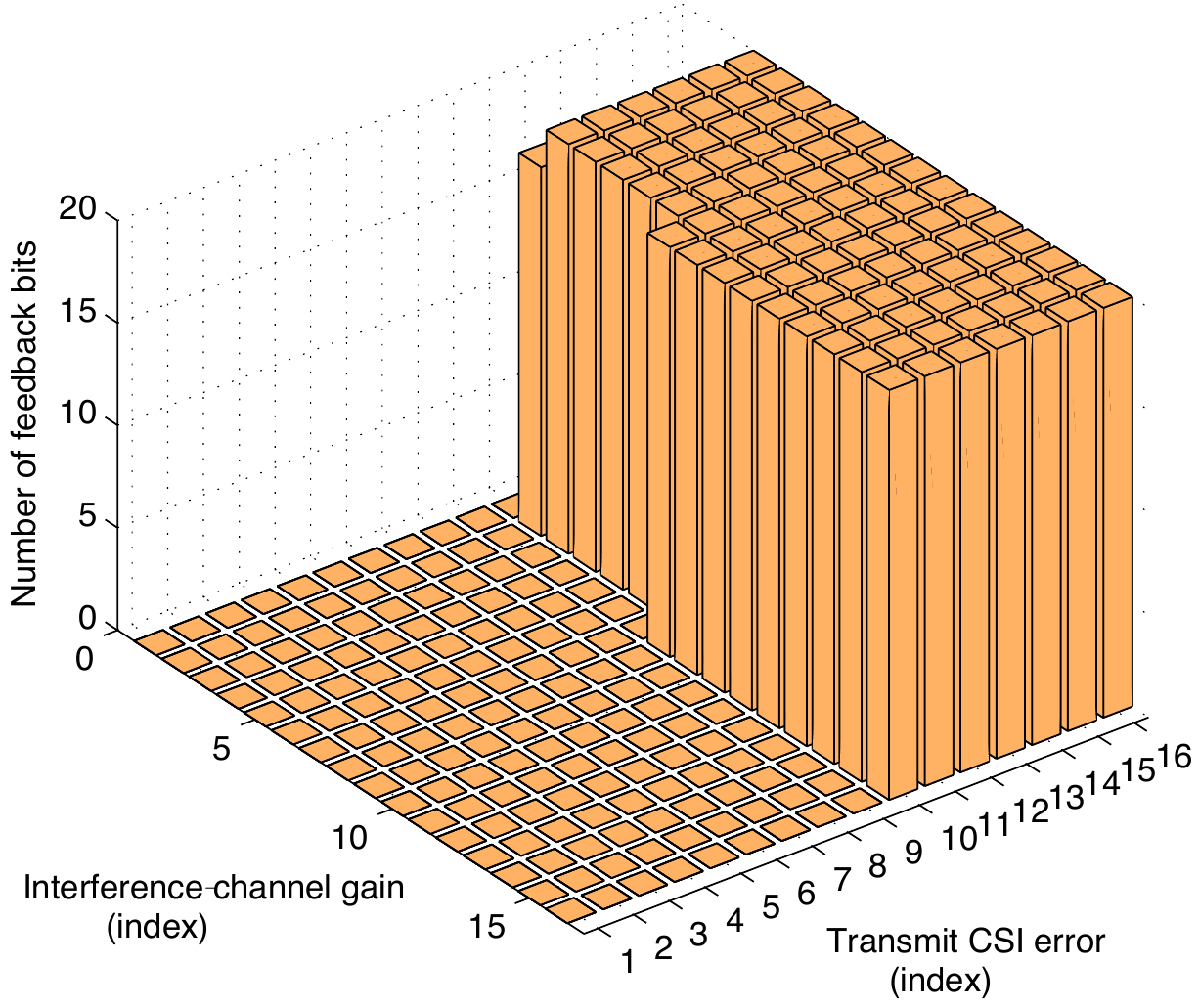}\hspace{-20pt}}
\caption{Optimal feedback-control policies  given average sum-feedback constraints and a discrete  state space}
\label{Fig:Policy}
\end{center}
\end{figure}

\section{Simulation Results}\label{Section:Simulation}
The simulation has the following settings unless specified otherwise.   The number of antennas $L=4$, the number of users $K=3$, and  the set of available numbers of feedback bits is $\mathds{B} = \{2n\mid 0\leq n \leq 15\}$. All channel fading coefficients are modeled as i.i.d. $\mathcal{CN}(0, 1)$ Gaussian processes. For low-to-moderate mobility,  the temporal correlation of each process is specified by Clark's function \cite{JakesBook}.  The values of Doppler frequency are normalized by the symbol rate.  
The state space for feedback control at low-to-moderate mobility is discretized to have $M=16$ grid points for the interference-channel gain  and $N=16$ points for the CSIT error. The set  $\hat{\mathcal{G}}$ is chosen based on the equal-probability criterion such that $\Pr( \tilde{g}_{k} \leq g^{[mn]}< \tilde{g}_{k+1})=\frac{1}{M}$ for $1\leq k \leq M$, $\tilde{g}_1 = 0$ and $\tilde{g}_{M+1} = \infty$. The CSI quantization error is generated based on the sphere-cap-quantized-CSI model in Example~\ref{Ex:SphereCap}. Correspondingly, the grid points for the CSIT error are chosen to be the expected quantization errors for different numbers of feedback bits in $\mathds{B}$, namely $\hat{\mathcal{D}} = \l\{\frac{L-1}{L}2^{-\frac{B}{L-1}}\mid B\in \mathds{B}\r\}$. 

\begin{figure}
\begin{center}
\includegraphics[width=8.7cm]{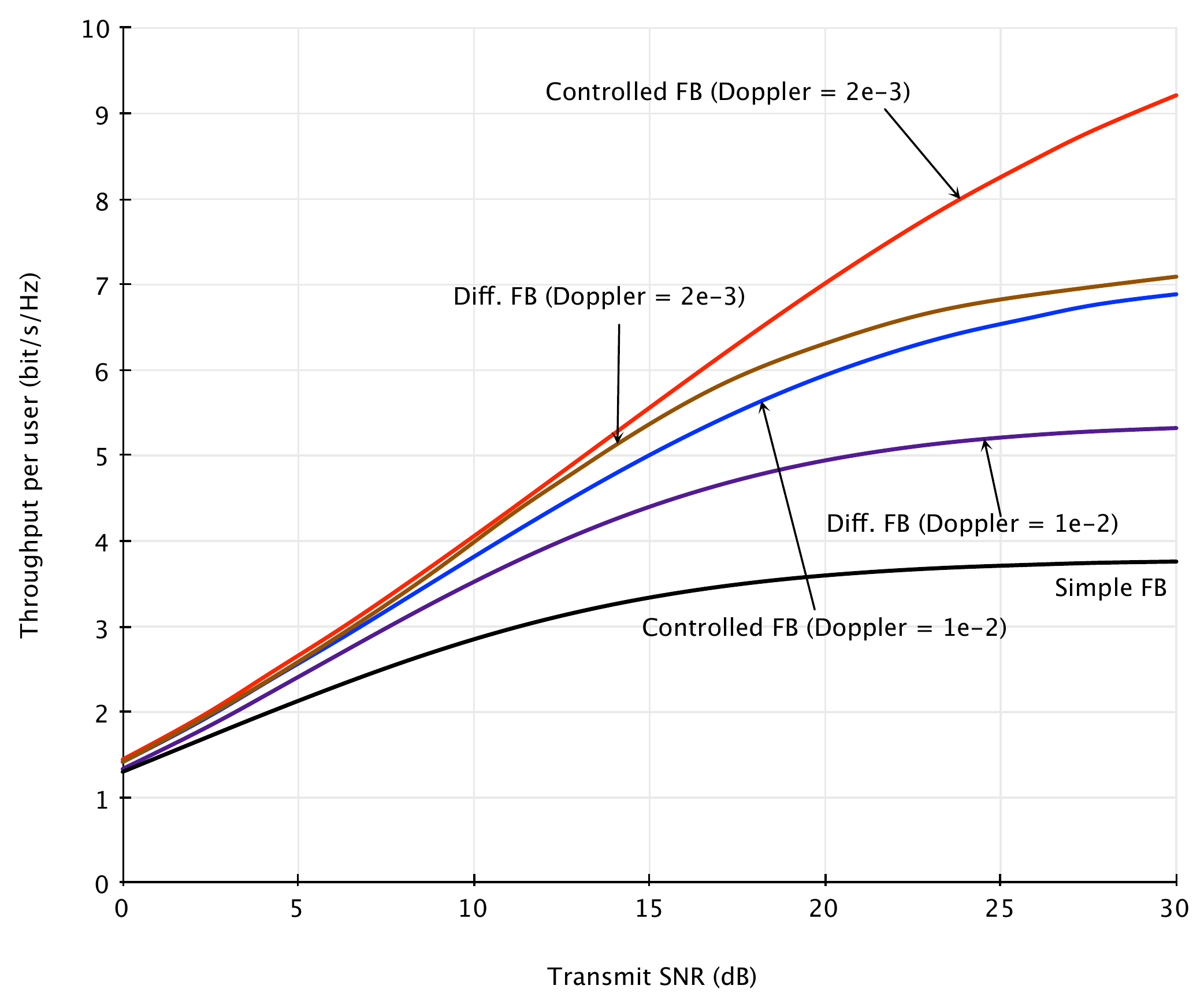}
\caption{Throughput-per-user versus transmit SNR for low-to-moderate mobility and different limited-feedback techniques  under an average sum-feedback constraint per user  of  $16$ bit/slot.  }
\label{Fig:CapacitySNR}
\end{center}
\end{figure}

Fig.~\ref{Fig:Policy} to \ref{Fig:CapacityFbRate}  concern stochastic feedback control for low-to-moderate mobility. Fig.~\ref{Fig:Policy} shows the optimal feedback-control policies computed using policy iteration for different combinations of (normalized) Doppler frequency $f_d$  and average sum-feedback rates $\bar{b}$. Both Fig.~\ref{Fig:Policy}(a) and \ref{Fig:Policy}(b) are consistent with Theorem~\ref{Theo:Policy}. Specifically, it can be observed from the figures that given the optimal policy, the state space is partitioned into the feedback and no feedback regions. Moreover, in the feedback region, $B^\star$ is independent of the CSIT error $\hat{\delta}$; given $\hat{\delta}$, $B^\star$ is a monotone non-decreasing function of $\hat{g}$. Comparing Fig.~\ref{Fig:Policy}(a) and \ref{Fig:Policy}(b), increasing Doppler frequency and the average sum-feedback rate enlarge the feedback region as well as the numbers of feedback bits in the feedback region. 

{
Fig.~\ref{Fig:CapacitySNR} shows  the throughput-per-user versus transmit SNR for optimally controlled feedback given $\bar{b} = 12$ bit/slot and for conventional feedback algorithms with a sum feedback constraint of $16$ bit/slot, where the additional $4$ bit/slot accounts for the extra feedback-control overhead for specifying a varying number of feedback bits. For comparison, two existing feedback methods are considered, namely \emph{simple feedback} for which  CSI in each slot is quantized with a fixed resolution ($8$ bits) and feedback is performed in each slot (see e.g., \cite{LovHeaETAL:GrasBeamMultMult:Oct:03}) and \emph{differential feedback} that exploits channel temporal correlation for feedback reduction (see e.g., \cite{KimLov:MIMODiffFBSlowFading:2011}). The different-feedback algorithm considered here is from \cite{KimLov:MIMODiffFBSlowFading:2011} and allows transmitter $n$ to construct the channel direction $\hat{\bs}_t^{[mn]}$ using the past CSI $\hat{\bs}_{t-1}^{[mn]}$ and a quantized $L\times L$ unitary matrix $\Lambda^{[mn]}_t$ sent by receiver $m$ as follows:
\begin{equation}
\hat{\bs}_t^{[mn]} = \l(\sqrt{1-\nu^2} + \nu \Lambda^{[mn]}_t\r) \hat{\bs}_{t-1}^{[mn]}
\end{equation}
where $0 < \nu < 1 $ is adapted to Doppler frequency by a numerical search using the criterion of maximum throughput and $\Lambda^{[mn]}_t$ is chosen from a $8$-bit random codebook of i.i.d. entries such that the CSI error is minimized. 
From Fig.~\ref{Fig:CapacitySNR}, the  throughput-per-user for both simple and differential feedback is observed to saturate as the transmit SNR increases and residual interference  becomes dominant over noise. The use of feedback control alleviates this performance degradation and increases  the throughput-per-user significantly especial at high SNRs. Moreover, the throughput-per-user given feedback control increases rapidly as the Doppler frequency decreases, corresponding to growing redundancy in CSI. Specifically, reducing $f_d$ from $1\times 10^{-2}$ to $2\times 10^{-3}$ increases the throughput-per-user by up to about $2$ bit/s/Hz.  

\begin{figure}
\begin{center}
\includegraphics[width=8.7cm]{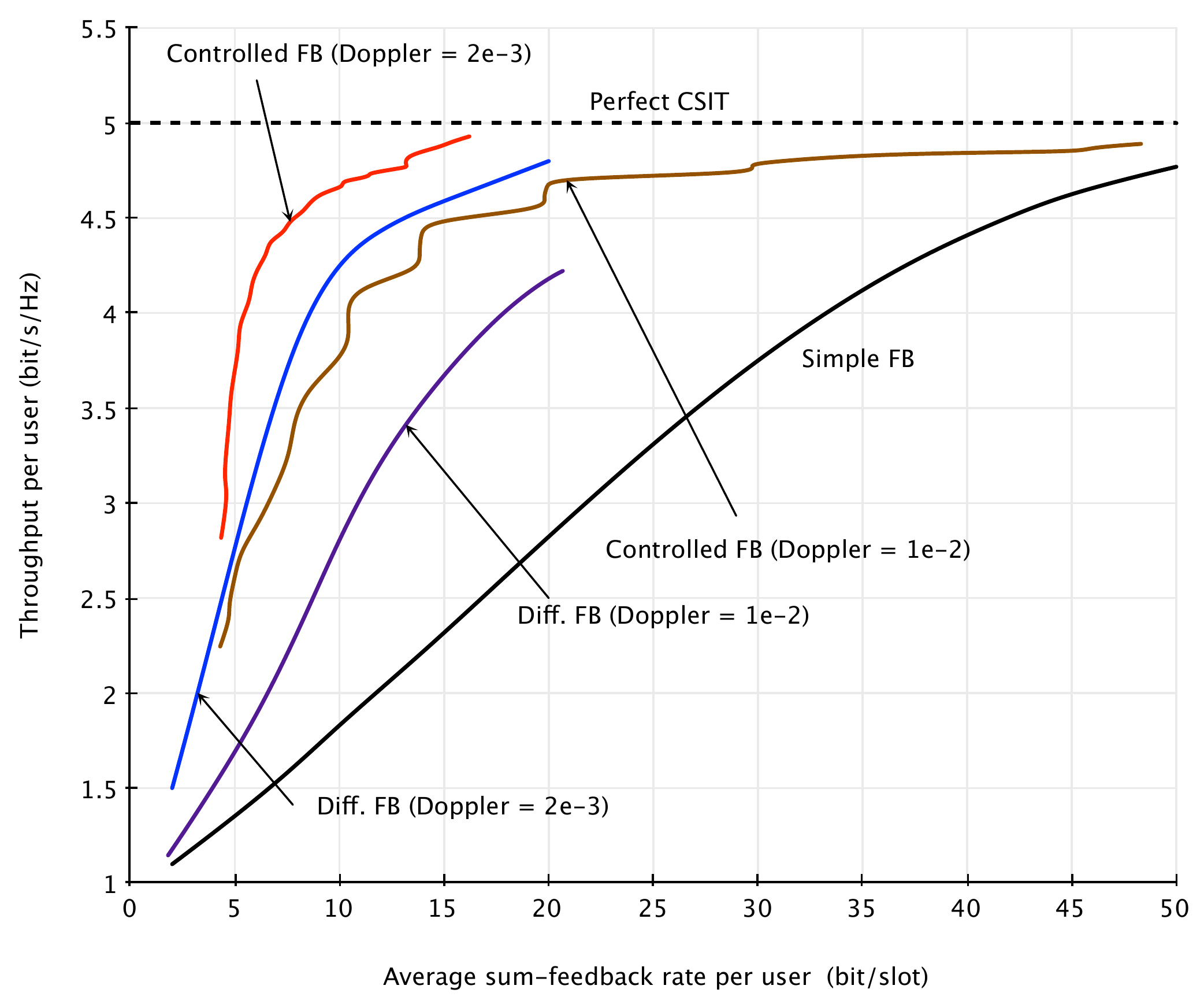}
\caption{Throughput-per-user versus average sum-feedback rate for different limited-feedback techniques with  low-to-moderate mobility and the transmit SNR equal to $13$ dB.}
\label{Fig:CapacityFbRate}
\end{center}
\end{figure}

Fig.~\ref{Fig:CapacityFbRate} shows  the throughput-per-user versus average sum-feedback rate per user for both controlled feedback and conventional feedback methods, where the additional controlled-feedback overhead mentioned earlier has been accounted for. It can be observed that as the average sum-feedback rate increases, the throughput-per-user for the optimally controlled feedback converges  to the upper bound corresponding to perfect CSIT faster than that for differential feedback and much more rapidly than that for simple feedback. Consequently, given the same average sum-feedback constraint, the optimal feedback control yields higher throughput than the two conventional methods. It can be observed that exploiting the channel temporal correlation by either feedback control or differential feedback can provide significant throughput gains. 
For example, feedback control increases the throughput-per-user of simple feedback by about $3$ times given the average sum-feedback rate of $7$ bit/slot and $f_d = 2\times 10^{-3}$. Last, note that the humps on the curves for controlled   feedback  are due to  discretization of the state space. }

Finally, we consider the optimal feedback control for high mobility. Fig.~\ref{Fig:CapacityIID} displays the curves of throughput-per-user  versus average sum-feedback rate  per user  for controlled feedback as well as no feedback control, namely that the  rates for different feedback links are equal and  simple feedback is applied.  These results are based on $\mathds{B} \in \mathds{N}^+$, aligned with the analysis in Section~\ref{Section:FBControl:HiMob}. It is observed that the throughput  gain of the optimal feedback control with respect to the case of no  feedback control is marginal given symmetric channel distributions and high mobility, namely no redundancy in CSI. However, this gain is significant in the presence of asymmetric channel distributions and  unequal distribution of average feedback rates over different feedback links.


\begin{figure}
\begin{center}
\includegraphics[width=8.7cm]{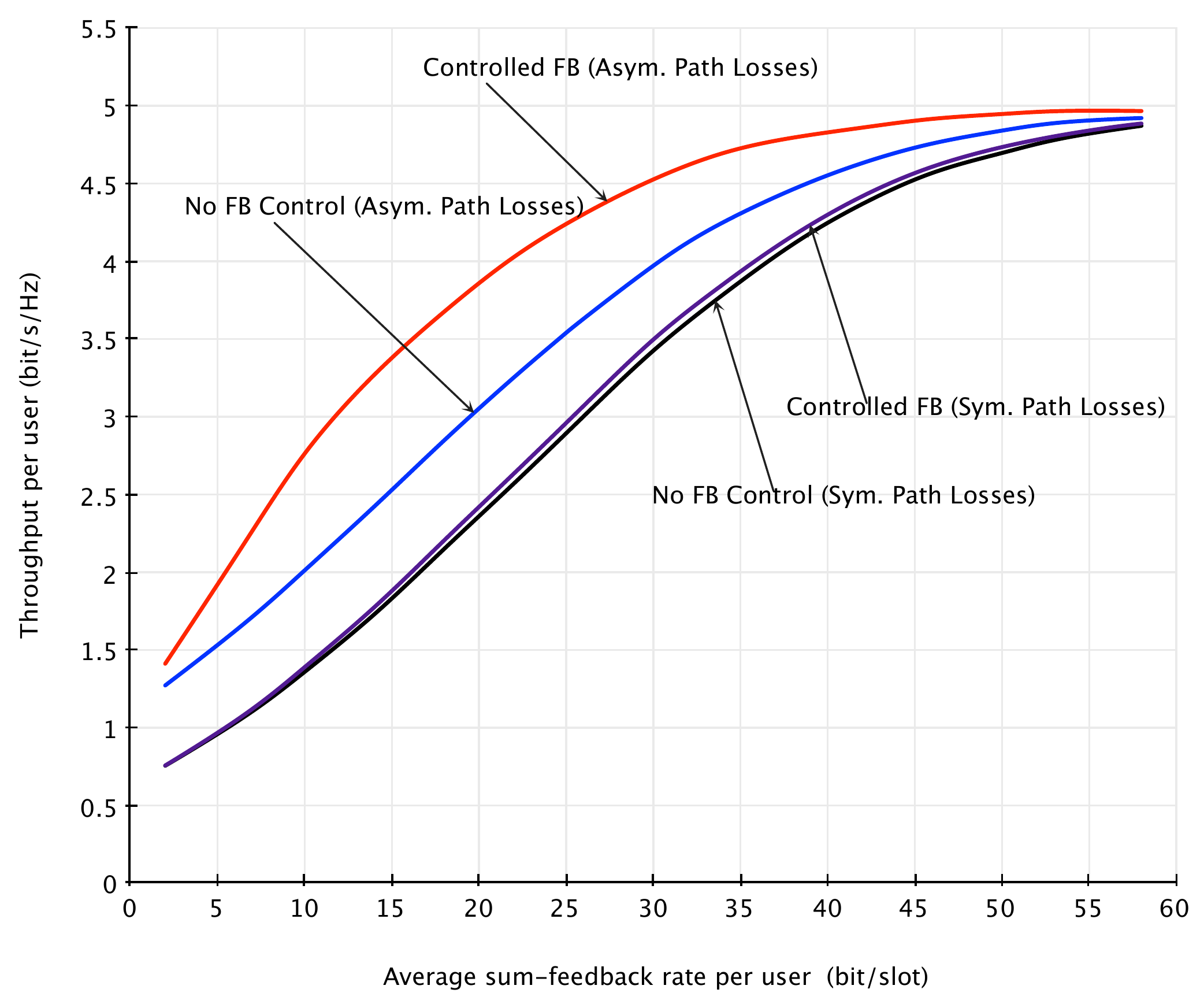}
\caption{Throughput-per-user versus average sum-feedback rate for  the optimal feedback control with  high mobility. All interference channels have unit propagation distances for the case of symmetric channel distributions. For the asymmetric case, the $(K-1) = 2$ interferers for each receiver are located at distances of $1$ and $3$ units away. The path-loss exponent is $\alpha = 3$ and transmit SNR $13$ dB. }
\label{Fig:CapacityIID}
\end{center}
\end{figure}

\section{Conclusion}\label{Section:Conclusion}
This work has proposed the new approach of  distributive and stochastic control of event-driven CSI feedback in multi-antenna interference networks.  For symmetric channel distributions, the optimal feedback-control policy for each feedback link  has been proved to be opportunistic. Specifically, feedback is performed only if the corresponding interference-channel gain is large or the CSI at the transmitter is significantly outdated; the number of feedback bits increases with the interference-channel gain. For high-mobility and symmetric channel distributions, by considering a specific CSI quantization model, the optimal feedback policy has been shown to be of the water-filling type that also  has the above opportunistic properties. For high-mobility and heterogeneous path-losses for the interference channels, the optimization of the feedback controller has been decomposed into a master problem and a sub-problem. We have proved the existence of an unique solution for the decomposed optimization problems. 

To the best of our knowledge, this is the first work on applying stochastic-optimization theory to design feedback controllers in multi-antenna interference networks. This work opens several issues for future investigation. First, in the case of bursty traffic, the queues and feedback-links can be jointly controlled to achieve the optimal tradeoff between transmission delay and feedback overhead. Second, the event-driven feedback targets shared feedback channels where feedback collisions are inevitable. Collisions and the resultant feedback delay are omitted in the current work but important issues to consider in designing practical feedback controllers and protocols. Last, it is challenging to generalize  the current  feedback-controller  designs to  more complex settings such as MIMO channels and spatial multiplexing,  and alternative beamforming algorithms such as one using the minimum-mean-square-error criterion.

\bibliographystyle{ieeetr}

\appendix

\subsection{Proof for Lemma~\ref{Lem:Fun:f}}\label{App:Fun:f}
For ease of notation, define a function $\Phi(\bar{g}_k\mid \bar{\delta}_b,  B^\star)$ as 
\begin{equation}\label{Eq:Phi:Fun}
\Phi(\bar{g}_k\!\mid\! \bar{\delta}_b,  B^\star) \!=\!\! \underset{\hat{\delta}_{t+1}}{\E}\!\!\l[\mathcal{V}_\rho^\star(\hat{g}_{t+1}\!=\!\bar{g}_k, \hat{\delta}_{t+1})\!\mid\! \hat{\delta}_{t}\!=\!\bar{\delta}_b,  B^\star\r].
\end{equation}
We can write that 
\begin{align}
\Phi(\bar{g}_k\mid \bar{\delta}_b,  B^\star) &= \sum_{n=1}^N \mathcal{V}_\rho^\star(\bar{g}_k, \bar{\delta}_n)P_{b, n}(B^\star)\nn\\ 
&= \sum_{n=1}^N\l[\mathcal{V}_\rho^\star(\bar{g}_k, \bar{\delta}_n) - \mathcal{V}_\rho^\star(\bar{g}_k, \bar{\delta}_{n-1})\r]\times\nn\\
&\qquad \sum_{m=n}^N P_{b, m}(B^\star) \nn
\end{align}
with $\mathcal{V}_\rho^\star(\bar{g}_k, \bar{\delta}_{0})=0$. 
Similarly, from \eqref{Eq:Phi:Fun}, we can obtain that 
\begin{align}
\E[&\mathcal{V}_\rho^\star(\hat{g}_{t+1}, \hat{\delta}_{t+1})\mid \hat{g}_t=\bar{g}_a, \hat{\delta}_t = \bar{\delta}_b, B^\star] \nn\\
& =\E[\Phi(\hat{g}_{t+1}\mid\bar{\delta}_b,  B^\star)\mid \hat{g}_{t}=\bar{g}_a]\nn\\
&=\sum_{k=1}^M \l[\Phi(\bar{g}_k\mid \bar{\delta}_b,  B^\star) - \Phi(\bar{g}_{k-1}\mid\bar{\delta}_b,  B^\star)\r] \sum_{\ell=k}^M \tilde{P}_{a, \ell}\nn\\
&= \sum_{n=1}^N \sum_{k=1}^M f(\bar{\delta}_n, \bar{g}_k)\sum_{m=n}^N P_{b, m}(B^\star)\sum_{\ell=k}^M \tilde{P}_{a, \ell}\label{Eq:J:Cond:Exp}
\end{align}
with $\Phi(\bar{g}_{0}\mid\bar{\delta}_b,  B^\star)=0$. 
Using  \eqref{Eq:CostPerStage}, \eqref{Eq:Bellman}, and \eqref{Eq:DPOperator}, it can be obtained for $B^\star = 0$ that 
\begin{align}
\mathsf{F}\mathcal{V}_\rho^\star(\bar{g}_a, \bar{\delta}_b) &= \bar{g}_a\bar{\delta}_b +  \E\l[\mathcal{V}_\rho^\star(\hat{g}_{t+1}, \hat{\delta}_{t+1})\mid \hat{g}_t=\bar{g}_a, \hat{\delta}_t = \bar{\delta}_b\r]\nn\\
&= \bar{g}_a\bar{\delta}_b +  \sum_{n=1}^N \sum_{k=1}^M f(\bar{\delta}_n, \bar{g}_k)\sum_{m=n}^N P_{b, m}(0)\sum_{\ell=k}^M \tilde{P}_{a, \ell} \label{Eq:J:B:1}
\end{align}
where \eqref{Eq:J:B:1} uses \eqref{Eq:J:Cond:Exp}. It follows from \eqref{Eq:f:Fun} and \eqref{Eq:J:B:1}  that
\begin{align}
\mathsf{F}f(\bar{g}_a, \bar{\delta}_b) 
&= \mathsf{F}\mathcal{V}^\star_\rho(\bar{g}_k, \bar{\delta}_\ell) - \mathsf{F}\mathcal{V}^\star_\rho(\bar{g}_k, \bar{\delta}_{\ell-1}) -\nn \\
& \qquad \mathsf{F}\mathcal{V}^\star_\rho(\bar{g}_{k-1}, \bar{\delta}_\ell)+ \mathsf{F}\mathcal{V}^\star_\rho(\bar{g}_{k-1}, \bar{\delta}_{\ell-1})\label{Eq:f:Fun:DP}\\
&= (\bar{g}_a - \bar{g}_{a-1})(\bar{\delta}_b -  \bar{\delta}_{b-1}) + \sum_{n=1}^N \sum_{k=1}^M f(\bar{\delta}_n, \bar{g}_k)\times\nn\\
&\qquad\l[\sum_{m=n}^N P_{b, m}(0)-\sum_{m=n}^N P_{b-1, m}(0)\r]\times\nn\\
&\qquad \qquad\l[\sum_{\ell=k}^M \tilde{P}_{a, \ell} - \sum_{\ell=k}^M \tilde{P}_{a-1, \ell}\r]\label{Eq:f:Fun:DP:a}
\end{align}
with $\bar{g}_0 = \bar{\delta}_0=0$. 
Given Assumption~\ref{AS:SD},  \eqref{Eq:f:Fun:DP:a} yields that $\mathsf{F} f(\hat{g},  \hat{\delta})\geq 0$ for all $(\hat{g}, \hat{\delta})$ if  $f(\hat{g},  \hat{\delta}) \geq 0$ for all $(\hat{g}, \hat{\delta})$ and $B^\star = 0$. Next, from \eqref{Eq:CostPerStage}, \eqref{Eq:Bellman}, and \eqref{Eq:DPOperator}, it can be obtained for $B^\star > 0$ that
\begin{equation}\label{Eq:J:B}
\begin{aligned}
\mathsf{F}\mathcal{V}_\rho^\star(\bar{g}_a, \bar{\delta}_b) = \bar{g}_a&\E[\epsilon\mid B^\star] + B^\star + \\
&\E\bigl[\mathcal{V}_\rho^\star(\hat{g}_{t+1}, \hat{\delta}_{t+1})\mid \hat{g}_t=\bar{g}_a, B^\star\bigr]. 
\end{aligned}
\end{equation}
It can be observed from \eqref{Eq:J:B} that $\mathsf{F}\mathcal{V}_\rho^\star(\bar{g}_a, \bar{\delta}_b)$ is independent with $\bar{\delta}_b$. Thus, using this fact and \eqref{Eq:f:Fun:DP} gives that 
$\mathsf{F} f(\hat{g}, \hat{\delta}) =0$ for $B^\star > 0$. By combining above results, we conclude that the policy iteration retains the property $f(\hat{g},  \hat{\delta}) \geq 0$ if its initialization has such a property (e.g., $f(\hat{g},  \hat{\delta}) = 1$ for all $(\hat{g},  \hat{\delta})$). This completes the proof.

\subsection{Proof for Lemma~\ref{Lem:JBFunc}}\label{App:JBFunc}
Using \eqref{Eq:CostPerStage}, \eqref{Eq:Z:Fun} and \eqref{Eq:J:Cond:Exp}, it can be obtained that 
 \begin{align}\label{Eq:J:B:1m}
Z&(a, b, B)  = \nn \\
&\l\{
\begin{aligned}
& a\E[\epsilon \mid B] + B + \sum_{n=1}^N \sum_{k=1}^M f(\bar{\delta}_n, \bar{g}_k)\\
&\qquad \sum_{m=n}^N P_{b, m}(B)\sum_{\ell=k}^M \tilde{P}_{a, \ell}, && B > 0\\
&ab + \sum_{n=1}^N \sum_{k=1}^M f(\bar{\delta}_n, \bar{g}_k)\sum_{m=n}^N \bar{P}_{b,m}(0)\sum_{\ell=k}^M \tilde{P}_{a, \ell}, && B = 0. 
\end{aligned}
\r.
\end{align}
Given Assumption~\ref{As:QErr} and \ref{AS:EspCorr:Convex} and  using Lemma~\ref{Lem:Fun:f}, Property $1)$ in the lemma statement holds since $Z(a, b, B)$ is a nonnegative combination of monotone decreasing and convex functions of $B$ as can be observed from \eqref{Eq:J:B:1m}. Property $2)$ follows from Assumption~\ref{AS:SD}, Lemma~\ref{Lem:Fun:f} and  \eqref{Eq:J:B:1m}.  

\subsection{Proof for Theorem~\ref{Theo:Policy}}\label{App:Policy}
Since $\hat{\mathcal{P}}_{\rho}^\star\rightarrow \hat{\mathcal{P}}^\star$ as $\rho\rightarrow 1$, it is sufficient to prove that given an arbitrary $\rho \in (0, 1)$, $\hat{\mathcal{P}}_{\rho}^\star$ has the properties of $\hat{\mathcal{P}}^\star$ as described in the theorem statement. Assume that  there exists $(a, b)\in \hat{\mathcal{X}}$ such that $\hat{\mathcal{P}}_\rho^\star(a, b) = 0$. Using \eqref{Eq:Bellman} and \eqref{Eq:Z:Fun}, $(a, b)$ satisfies the following condition
\begin{equation}\label{Eq:JJCmp}
Z(a, b, 0) \leq \min_{B \neq 0} Z(a, b, B). 
\end{equation}
It follows from Assumption~\ref{AS:SD} and \eqref{Eq:J:B:1m} that with $a$ fixed, $Z(a, b, 0)$ is a monotone increasing function of $b$. As a result, we obtain from \eqref{Eq:JJCmp} that 
\begin{align}
Z(a, \delta, 0) &\leq \min_{B \neq 0} Z(a, b, B), \quad \forall \ \delta \leq b \nn\\
&= \min_{B \neq 0} Z(a, \delta, B)\label{Eq:Min:Z}
\end{align}
where \eqref{Eq:Min:Z} holds since $Z(a, b, B)$  can be observed from \eqref{Eq:J:B:1m} to be independent with $b$ if $B \neq 0$. 
Property $1)$ in the theorem statement is proved by combining \eqref{Eq:Relate:PZ} and \eqref{Eq:Min:Z}. Next, assume that  there exists $(a, b)\in \hat{\mathcal{X}}$ such that $\hat{\mathcal{P}}_{\rho}(a, b) > 0$. This implies that $\hat{\mathcal{P}}_{\rho}(a, \delta) > 0$ for 
all $\delta \geq b$ since otherwise $\hat{\mathcal{P}}_{\rho}(a, b) = 0$ based on Property~$1$, which violates the earlier assumption. Therefore, 
\begin{align}
\hat{\mathcal{P}}_{\rho}(a, \delta) &= \arg\min_{B \neq 0} Z(a, \delta, B), \quad \forall \ \delta \geq b \nn\\
&= \arg\min_{B \neq 0} Z(a, b, B)\label{Eq:Min:Z:a}
\end{align}
where \eqref{Eq:Min:Z:a} results from the equality in \eqref{Eq:Min:Z}. 
Property~$2)$ in the theorem statement follows from \eqref{Eq:Relate:PZ} and \eqref{Eq:Min:Z:a}.  

Last, assume that there exist $(\bar{g}_a, \bar{\delta}_b), (\bar{g}_c, \bar{\delta}_b) \in \hat{\mathcal{X}}$ such that $\bar{g}_a\leq \bar{g}_c$,   $\hat{\mathcal{P}}_{\rho}(\bar{g}_a, \bar{\delta}_b) > 0$, and $\hat{\mathcal{P}}_{\rho}(\bar{g}_c, \bar{\delta}_b) > 0$. To facilitate the proof, we arrange the elements of $\mathds{B}$ in the ascending order: $\mathds{B} = \{\bar{B}_1, \bar{B}_2, \cdots, \bar{B}_A\}$ with $\bar{B}_1 \leq \bar{B}_2 \cdots \leq \bar{B}_A$ and $A = |\mathds{B}|$. Moreover, given $\hat{x} \in \hat{\mathcal{X}}$, define the differences
\begin{equation}
\Delta^+Z(\hat{x}, \bar{B}_u) = Z(\hat{x}, \bar{B}_u) - Z(\hat{x}, \bar{B}_{u+1}) \label{Eq:J:Diff+}
\end{equation}
with   $1\leq u < A$ and 
\begin{equation}
\Delta^-Z(\hat{x}, \bar{B}_u) = Z(\hat{x}, \bar{B}_u) -  Z(\hat{x}, \bar{B}_{u-1}) \label{Eq:J:Diff-}
\end{equation}
with  $1 < u \leq A$. The substitution of  \eqref{Eq:J:B:1m} into  \eqref{Eq:J:Diff+} gives 
\begin{equation}\begin{aligned}
&\Delta^+ Z(\bar{g}_a, \bar{\delta}_b, \bar{B}_u) =\bar{g}_a\l\{\E[\epsilon\mid \bar{B}_u] - \E[\epsilon\mid \bar{B}_{u+1}]\r\}  +\\
&\qquad \sum_{n=1}^N \sum_{k=1}^M f(\bar{\delta}_n, \bar{g}_k) \sum_{\ell=k}^M \tilde{P}_{a, \ell}\biggl[\sum_{m=n}^N P_{b, m}(\bar{B}_u) - \\
&\qquad \qquad \sum_{m=n}^N P_{b, m}(\bar{B}_{u+1})\biggr]  - (\bar{B}_{u+1} - \bar{B}_u). 
\end{aligned}
\end{equation}
It follows that 
\begin{align}
\Delta^+& Z(\bar{g}_a,  \bar{\delta}_b, \bar{B}_u) - \Delta^+Z(\bar{g}_c, \bar{\delta}_b, \bar{B}_u) \nn\\
&= (\bar{g}_a-\bar{g}_c)\l\{\E[\epsilon\mid \bar{B}_u] - \E[\epsilon\mid \bar{B}_{u+1}]\r\}  + \nn\\
&\qquad \sum_{n=1}^N \sum_{k=1}^M f(\bar{\delta}_n, \bar{g}_k)\l[\sum_{\ell=k}^M \tilde{P}_{a, \ell} - \sum_{\ell=k}^M \tilde{P}_{c, \ell}\r]\times\nn\\
&\qquad \qquad \l[\sum_{m=n}^N P_{b, m}(\bar{B}_u) - \sum_{m=n}^N P_{b, m}(\bar{B}_{u+1})\r]\nn\\
&\leq 0\label{Eq:Delta:GZ+} 
\end{align}
where \eqref{Eq:Delta:GZ+} is obtained using Assumption~\ref{As:QErr} and \ref{AS:SD},  and Lemma~\ref{Lem:Fun:f}. Similarly, it can be shown that 
\begin{equation}
\Delta^-Z(\bar{g}_a, \bar{\delta}_b, \bar{B}_u) - \Delta^-Z(\bar{g}_c, \bar{\delta}_b, \bar{B}_u) \geq 0.  \label{Eq:Delta:GZ-} 
\end{equation}
By replacing $\bar{B}_u$ in \eqref{Eq:Delta:GZ+} and \eqref{Eq:Delta:GZ-} with $\hat{\mathcal{P}}^\star_\rho(\bar{g}_a, \bar{\delta}_b)$, 
\begin{align}
\Delta^+Z(\bar{g}_c, \bar{\delta}_b, \hat{\mathcal{P}}^\star_\rho(\bar{g}_a, \bar{\delta}_b))&\geq \Delta^+Z(\bar{g}_a, \bar{\delta}_b, \hat{\mathcal{P}}^\star_\rho(\bar{g}_a, \bar{\delta}_b))\label{Eq:ZIneq:1}\\
 \Delta^-Z(\bar{g}_c, \bar{\delta}_b, \hat{\mathcal{P}}^\star_\rho(\bar{g}_a, \bar{\delta}_b))&\leq  \Delta^-Z(\bar{g}_a, \bar{\delta}_b, \hat{\mathcal{P}}^\star_\rho(\bar{g}_a, \bar{\delta}_b)).\label{Eq:ZIneq:2}
\end{align}
 For $B\in\mathds{B}$ and $B > 0$, $Z(\hat{x}, B)$ with  $\hat{x}$ fixed is a convex function of $B$  according to Lemma~\ref{Lem:JBFunc} and from \eqref{Eq:Relate:PZ} $\hat{\mathcal{P}}^\star_\rho(\hat{x})$ minimizes $Z(\hat{x}, B)$ over $B$. Consequently, 
\begin{equation}\label{Eq:ZIneq:3}
\Delta^+Z(\hat{x}, \hat{\mathcal{P}}^\star_\rho(\hat{x}))\leq 0, \qquad \Delta^-Z(\hat{x},  \hat{\mathcal{P}}^\star_\rho(\hat{x}))\leq 0,  
\end{equation}
for any $B \leq \hat{\mathcal{P}}^\star_\rho(\hat{x})$, 
\begin{equation}
\Delta^+Z(\hat{x}, \hat{\mathcal{P}}^\star_\rho(\hat{x}))\geq 0, \qquad \Delta^-Z(\hat{x},  \hat{\mathcal{P}}^\star_\rho(\hat{x}))\leq 0, \label{Eq:ZIneq:4}
\end{equation}
and for any $B \geq \hat{\mathcal{P}}^\star_\rho(\hat{x})$, 
\begin{equation}
\Delta^+Z(\hat{x}, \hat{\mathcal{P}}^\star_\rho(\hat{x}))\leq 0, \qquad \Delta^-Z(\hat{x},  \hat{\mathcal{P}}^\star_\rho(\hat{x}))\geq 0.  \label{Eq:ZIneq:5} 
\end{equation}
It can be concluded that $\hat{\mathcal{P}}^\star_\rho(\bar{g}_c, \bar{\delta}_b)\geq \hat{\mathcal{P}}^\star_\rho(\bar{g}_a, \bar{\delta}_b)$ by combining \eqref{Eq:ZIneq:1}, \eqref{Eq:ZIneq:2} and \eqref{Eq:ZIneq:3} and comparing the result with \eqref{Eq:ZIneq:4} and \eqref{Eq:ZIneq:5}.  This proves Property $3)$ in the theorem statement, completing the proof. 

\subsection{Proof for Proposition~\ref{Prop:WaterFill:AvFb}}\label{App:WaterFill:AvFb}
We claim that there exists a threshold $\Psi:g\rightarrow\delta$ such that \eqref{Eq:OpProb:AvFb:min:a} is equivalent to the following optimization problem:
\begin{equation} \label{Eq:OpProb:AvFb:cd}
\begin{aligned}
&\textrm{minimize:}&& \E\l[ g 2^{-\frac{B}{L-1}}\mid \delta \geq \Psi(g)\r]\\
&\textrm{subject to :}&& \E[B]\leq \frac{\bar{b}}{K-1}\\
&&&  B \geq 0 
\end{aligned}
\end{equation}
and prove the claim as follows. Let $B^\star$ denote the solution of \eqref{Eq:OpProb:AvFb:min:a}. Given $g$ and from \eqref{Eq:OpProb:AvFb:min:a}, if there exists $\delta_a\in\mathcal{D}$ such that $\frac{L-1}{L}2^{-\frac{B^\star}{L-1}} < \delta_a$, $\frac{L-1}{L}2^{-\frac{B^\star}{L-1}} < \delta$ for all $\delta \geq \delta_a$; if there exists $\delta_b\in\mathcal{D}$ such that $\frac{L-1}{L}2^{-\frac{B^\star}{L-1}} \geq \delta_b$, $B^\star$ satisfies $B^\star = 0$ for all $\delta \leq \delta_b$. This proves the claim.

The above optimization problem can be solved as follows. 
First, by neglecting the positivity constraint on $B$, the convex optimization problem in \eqref{Eq:OpProb:AvFb:cd} can be solved   using Lagrangian method \cite{BoydBook}. The resultant policy is specified in \eqref{Eq:WaterFill:AvFb}. 
Next, $\Psi$ is chosen  to suppress the expected  interference power as well as enforce two constraints in 
\eqref{Eq:OpProb:AvFb:min:a}: i) feedback reduces the expected CSI error, namely $\frac{L-1}{L}2^{-\frac{B}{L-1}} < \delta$ if $B > 0$ and ii) $B \geq 0 \ \forall  \ (g, \delta)$. It follows that the problem of optimizing  $\Psi$ is as given in \eqref{Eq:OpProb:AvFb:c} with $\Psi^{-1}(1)$ replaced with $\min_{\delta}\Psi^{-1}(\delta)$. Next, it can be observed from \eqref{Eq:IPwr:Min} that given $\Pr(\delta \geq \Psi(g))$, minimizing $\bar{I}^\star$ requires $\Psi(g)$ to be a monotone decreasing function of $g$, proving the stated monotonicity of $\Psi(g)$. As a result, $\min_{\delta}\Psi^{-1}(\delta)=\Psi^{-1}(1)$ and \eqref{Eq:OpProb:AvFb:c} follows. This completes the proof.

\subsection{Proof for Lemma~\ref{Lem:Convex}}\label{App:AsymChan} Let $\mathds{P}$ denote the space of the optimal feedback-control policies. Consider $B_a, B_b \in \bigcup_{\mathds{P}} \mathcal{P}^\star(g_t^{[mn]}, \delta_t^{[mn]})$ for given $(g_t^{[mn]}, \delta_t^{[mn]})\in\mathcal{X}$. Note that  $E\l[\epsilon^{[mn]}\mid B\r] < \delta_t^{[mn]}$ if $B \in \{B_a, B_b\}$ and $B > 0$. Using this fact and for $\mu \in [0, 1]$,  
the term  $q(B) =\min\l(E\l[\epsilon^{[mn]}\mid B\r], \delta_t^{[mn]}\r)$ in the objective function of the sub-problem is proved to be convex as follows
\begin{align}
&\mu q(B_a) + (1- \mu) q(B_b)\nn \\
&= \l\{\begin{aligned}
&\!\mu E\l[\epsilon^{[mn]}_t\!\mid\! B_a\r]\! +\! (1\!-\!\mu) E\l[\epsilon^{[mn]}_t\!\mid\! B_b\r], & B_a \!>\! 0, B_b \!>\! 0\\
&\!\mu \delta_t^{[mn]}  + (1-\mu) E\l[\epsilon^{[mn]}_t\!\mid\! B_b\r], & B_a\! =\! 0, B_b\! >\! 0\\
&\mu E\l[\epsilon^{[mn]}_t\!\mid\! B_a\r]  + (1-\mu) \delta_t^{[mn]}, & B_a\! >\! 0, B_b\! =\! 0\\
&\delta_t^{[mn]}, & B_a\! =\! 0, B_b\! =\! 0
\end{aligned}  \r.  \nn\\
&\geq \l\{\begin{aligned}
&E\l[\epsilon^{[mn]}_t\mid \mu B_a + (1-\mu)B_b\r], & B_a > 0, B_b > 0\\
&E\l[\epsilon^{[mn]}_t\mid B_b\r], & B_a = 0, B_b > 0\\
&E\l[\epsilon^{[mn]}_t\mid B_a\r], & B_a > 0, B_b = 0\\
&\delta_t^{[mn]}, & B_a = 0, B_b = 0
\end{aligned}  \r. \nn
\end{align}
where the inequality uses the convexity of $E\l[\epsilon^{[mn]}_t\mid B\r]$ over $B$ as assumed in Assumption~\ref{As:QErr}. It follows that 
\begin{equation}
\mu q(B_a) + (1- \mu) q(B_b)= q(\mu B_a + (1- \mu)B_b) \nn
\end{equation}
and hence $q(B)$ is a convex function.  

Next, we prove that $\bar{I}^{[mn]}_{\min}\l(x\r)$ is a convex function for $x > 0$ using the sample-path method. Consider two average sum-feedback rates $\bar{b}_x, \bar{b}_y > 0$. Let $\mathcal{P}_x^\star$ and $\mathcal{P}_y^\star$ denote the optimal feedback-control policies that yield $\bar{I}^{[mn]}_{\min}\l(\bar{b}_x\r)$ and $\bar{I}^{[mn]}_{\min}\l(\bar{b}_y\r)$, respectively. Consider the sample paths $\l\{g_t^{[mn]}\r\}_{t=1}^\infty$ and $\l\{\delta_t^{[mn]}\r\}_{t=1}^\infty$. Let $\{B_t^x\}_{t=1}^\infty$ and $\{B_t^y\}_{t=1}^\infty$ denote the sequences of numbers of feedback bits generated by  $\mathcal{P}_x^\star$ and $\mathcal{P}_y^\star$, respectively. Moreover, given $\mu \in [0, 1]$, define the sequence $  \{B_t^z\}_{t=1}^\infty= \mu \{B_t^x\}_{t=1}^\infty + (1-\mu)\{B_t^y\}_{t=1}^\infty$. Using the function $q(B)$ defined earlier, we can write 
\begin{align}
\mu& \bar{I}^{[mn]}_{\min}\l(\bar{b}_x\r) + (1 - \mu) \bar{I}^{[mn]}_{\min}\l(\bar{b}_y\r) \nn\\
&= \lim_{T\rightarrow\infty}\frac{1}{T}\E\l[\sum_{t=1}^T \frac{1}{L-1} g_t^{[mn]} \l[\mu q(B^x_t) + (1-\mu) q(B^y_t)\r] \r]\nn\\
&\geq \lim_{T\rightarrow\infty}\frac{1}{T}\l[\sum_{t=1}^T \frac{1}{L-1} g_t^{[mn]} q(B^z_t)\r]\label{Eq:App:A2}\\
&\geq \bar{I}^{[mn]}_{\min}\l(\mu \bar{b}_x + (1-\mu)\bar{b}_y\r) \label{Eq:App:A3}
\end{align}
where \eqref{Eq:App:A2} uses  the convexity of $q(B)$ as proved earlier. The desired result follows from \eqref{Eq:App:A3}.

\begin{biography}[{\includegraphics[width=1in, clip, keepaspectratio]{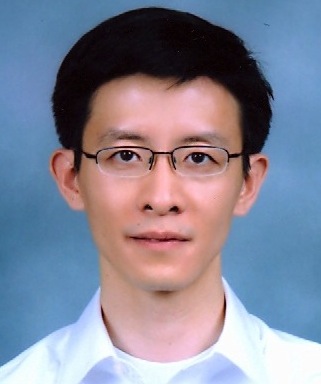}}]{Kaibin Huang}  (S'05--M'08) received the B.Eng. (first-class hons.) and the M.Eng. from the National University of Singapore in 1998 and 2000, respectively, and the Ph.D. degree from The University of Texas at Austin (UT Austin) in 2008, all in electrical engineering.

Since Mar. 2009, he has been an assistant professor in the School of Electrical and Electronic Engineering at Yonsei University, Seoul, Korea. From Jun. 2008 to Feb. 2009, he was a Postdoctoral Research Fellow in the Department of Electrical and Computer Engineering at the Hong Kong University of Science and Technology. From Nov. 1999 to Jul. 2004, he was an Associate Scientist at the Institute for Infocomm Research in Singapore. He frequently serves on the technical program committees of major IEEE conferences in wireless communications. Most recently, he is the lead publicity chair of IEEE Communication Theory Workshop 2011 and the area chairs of IEEE Asilomar 2011 and IEEE WCNC 2011. He is an editor for the Journal of Communication and Networks. Dr. Huang received  the Outstanding Teaching Award from Yonsei, the Motorola Partnerships in Research Grant, the University Continuing Fellowship at UT Austin, and the Best Student Paper award at IEEE GLOBECOM 2006. His research interests focus on limited feedback techniques  and the analysis and design of wireless networks using stochastic geometry.
\end{biography}
\vfill 

\begin{biography}[{\includegraphics[width=1in, clip, keepaspectratio]{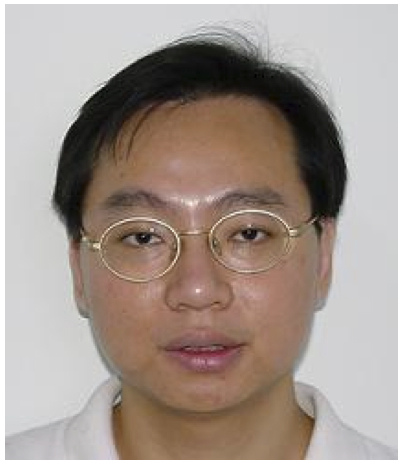}}]{Vincent K. N.  Lau} obtained B.Eng (Distinction 1st Hons) from the University of Hong Kong (1989-1992) and Ph.D. from the Cambridge University (1995-1997).  He is currently a Professor at the Dept. of ECE, Hong Kong University of Science and Technology. Vincent has contributed to more than 90 IEEE journal papers and  24 US patents on various wireless systems. His current research focus includes robust cross layer optimization for MIMO/OFDM wireless systems with imperfect channel state information, network MIMO optimization, communication theory with limited feedback as well as delay-sensitive cross layer optimizations.  
\end{biography}

\begin{biography}[{\includegraphics[width=1in, clip, keepaspectratio]{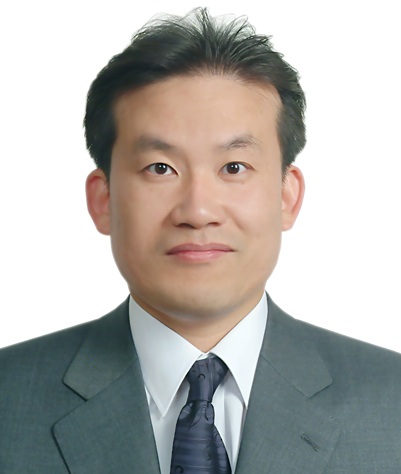}}]{Dongku Kim}   received the B. Eng. Degree from Korea Aerospace University, Korea, in 1983 and the M.Eng and Ph.D. degrees from the University of Southern California, Los Angeles, in 1985 and 1992, respectively. 

He was a Research Engineer with the Cellular Infrastructure Group, Motorola by 1994, and he has been a professor in the school of Electrical and Electronic Engineering, Yonsei University, Seoul, since 1994. He was a director of Radio Communication Research Center at Yonsei University and also a director of Qualcomm Yonsei CDMA Joint Research Lab since 1999. Currently, he is a director of Journal of Communications and Networks. His main research interests are Interference Alignment, MU-MIMO, Compressive Sensing, Mobile Multihop Relay, and UAV Tracking. 
\end{biography}

\vfill
\end{document}